\DeclareSymbolFontAlphabet{\amsmathbb}{AMSb}%
\newcommand{\cp}[1]{\ifmmode {\mathcal{#1}}\else ${\mathcal{#1}}$\fi}
\newcommand{\bA}{\boldsymbol{A}}
\newcommand{\bI}{\boldsymbol{I}}
\newcommand{\bM}{\boldsymbol{M}}
\newcommand{\bX}{\boldsymbol{X}}
\newcommand{\bY}{\boldsymbol{Y}}
\newcommand{\ba}{\boldsymbol{a}}
\newcommand{\bm}{\boldsymbol{m}}
\newcommand{\be}{\boldsymbol{e}}
\newcommand{\by}{\boldsymbol{y}}
\newcommand{\bs}{\boldsymbol{s}}
\newcommand{\bx}{\boldsymbol{x}}
\newcommand{\calM}{\mathcal{M}}
\newcommand{\calN}{\mathcal{N}}
\newcommand{\calU}{\mathcal{U}}
\newcommand{\bdelta}{\boldsymbol{\delta}}
\newcommand{\bmu}{\boldsymbol{\mu}}
\newcommand{\biota}{\boldsymbol{\iota}}
\newcommand{\bXi}{\boldsymbol{\Xi}}
\newcommand{\cb}[1]{\boldsymbol{#1}}
\newcommand{\tr}{\operatorname{tr}}
\newcommand{\cov}{\operatorname{cov}}
\def\cred{\textcolor{red}}
\definecolor{darkgreen}{rgb}{0.0, 0.85, 0.0}
\renewcommand{\cred}{}
\def\cmark{\textcolor{red}}
\newtheorem{theorem}{Theorem}%
\title{Fast Unmixing and Change Detection in Multitemporal Hyperspectral Data}
\author{Ricardo~Augusto~Borsoi,~\IEEEmembership{Student~Member,~IEEE,} Tales Imbiriba, Jos\'e~Carlos~Moreira~Bermudez,~\IEEEmembership{Senior~Member,~IEEE}, C\'edric Richard,~\IEEEmembership{Senior~Member,~IEEE}
\thanks{This work has been supported by the National Council for Scientific and Technological Development (CNPq) under grants 304250/2017-1, 409044/2018-0, 141271/2017-5 and 204991/2018-8.}
\thanks{R.A. Borsoi is with the Department of Electrical Engineering, Federal University of Santa Catarina (DEE--UFSC), Florian\'opolis, SC, Brazil, and with the Lagrange Laboratory, Universit\'e  C\^ote  d'Azur, Nice, France. e-mail: \mbox{raborsoi@gmail.com}.}
\thanks{T. Imbiriba is with the ECE department of the Northeastern University, Boston, MA, USA. e-mail: \mbox{talesim@ece.neu.edu}.}
\thanks{J.C.M. Bermudez is with the DEE--UFSC, Florian\'opolis, SC, Brazil. e-mail: \mbox{j.bermudez@ieee.org}.}
\thanks{C. Richard is with the  Universit\'e  C\^ote  d'Azur,  Nice, France (e-mail: cedric.richard@unice.fr), Lagrange Laboratory (CNRS, OCA).}
\thanks{Manuscript received Month day, year; revised Month day, year.}}
\begin{document}
\maketitle

\begin{abstract}

Multitemporal spectral unmixing (SU) is a powerful tool to process hyperspectral image (HI) sequences due to its ability to reveal the evolution of materials over time and space 
in a scene. However, significant spectral variability is often observed between collection of images due to variations in acquisition or seasonal conditions. This characteristic has to be considered in the design of SU algorithms. Because of its good performance, the multiple endmember spectral mixture analysis algorithm (MESMA) has been recently used to perform SU in multitemporal scenarios arising in several practical applications. However, MESMA does not consider the relationship between the different HIs, and its computational complexity is extremely high for large spectral libraries.
In this work, we propose an efficient multitemporal SU method that exploits the high temporal correlation between the abundances to provide more accurate results at a lower computational complexity. We propose to solve the multitemporal SU problem by separately addressing the endmember selection and the abundance estimation problems. This leads to a simpler solution without sacrificing the accuracy of the results. We also propose a strategy to detect and address abrupt abundance variations \cred{in time}.
Theoretical results demonstrate how the proposed method compares to MESMA in terms of quality, and how effective it is in detecting abundance changes. This analysis provides valuable insight into the conditions under which the algorithm succeeds.
Simulation results show that the proposed method achieves state-of-the-art performance at a smaller computational cost.

\end{abstract}

\begin{IEEEkeywords}
Hyperspectral data, multitemporal, spectral unmixing, endmember variability, MESMA.
\end{IEEEkeywords}

\section{Introduction}

Hyperspectral images (HI) have become a central tool in an increasing number of applications due to their high spectral resolution, which offers important information about the materials in a scene~\cite{Bioucas-Dias-2013-ID307}. However, inherent limitations of imaging devices and large sensor-to-target distances typical of many applications, such as remote sensing, lead to HIs with low spatial resolution~\cite{shaw2003spectralImagRemote}. Hence, each pixel in an HI is usually a mixture of spectral signatures of different pure materials, also called \emph{endmembers} (EM)~\cite{Keshava:2002p5667}. Spectral unmixing (SU) aims to decompose an HI into a collection of endmembers and their corresponding fractional \emph{abundances}, thus revealing important information on the distribution of the materials in the scene~\cite{Bioucas2012}.

The Linear Mixing Model (LMM) is the simplest and most widely used to represent the interaction between light and the materials in a given scene~\cite{Keshava:2002p5667}.  The LMM represents the reflectance of each pixel as a convex combination of the endmembers in the scene. The combination coefficients can then be interpreted as the fractional abundances contributed to each pixel by the EMs.
The simplest form of LMM models each material in the whole scene by a single endmember. Although allowing fast and simple SU strategies, such a model fails to account for the important phenomenon of endmember variability, observed in most practical scenes~\cite{borsoi2020variabilityReview,Zare-2014-ID324-variabilityReview,somers2011variabilityReview}.

Spectral variability can be caused by many factors including, for instance, atmospheric, illumination and seasonal variations, and can propagate significant abundance and endmember estimation errors throughout the unmixing process~\cite{borsoi2020variabilityReview,Zare-2014-ID324-variabilityReview}. This motivated the use of more elaborated mixing models and algorithms that explicitly address endmember variability in SU~\cite{borsoi2020variabilityReview,Zare-2014-ID324-variabilityReview}. 
Several parametric and non-parametric models have been recently proposed which account for EM spectra variation within a single HI. These include Beta or mixture of Gaussian distributions~\cite{du2014spatialBetaCompositional,zhou2018variabilityGaussianMixtureModel}, additive~\cite{Thouvenin_IEEE_TSP_2016_PLMM} and multiplicative scaling factors~\cite{drumetz2016blindUnmixingELMM,imbiriba2018glmm,borsoi2019tensorInterpolationICASSP,Borsoi_multiscaleVar_2018}, \cred{combinations of an uniform scaling and an additive variability dictionary~\cite{hong2019augmentedLMMvariability},} reparameterizations using deep neural networks~\cite{borsoi2019deepGun}, and low-rank tensor representations~\cite{imbiriba2018ULTRA_V}.
Although these models were able to produce promising results, the most prominent strategy to deal with spectral variability in SU is to represent EMs as sets of spectral signatures, called spectral libraries or \emph{bundles}~\cite{somers2011variabilityReview}.
The spectral libraries are usually constructed a priori from laboratory or \textit{in situ} measurements. They contain variants of the spectral signatures of each material, to better represent various acquisition conditions or physicochemical compositions.
SU can then be formulated as a problem of selecting from the library the subset of spectral signatures that best represents the observed HI. This usually entails the use of either sparse unmixing~\cite{iordache2011sunsal,borsoi2018superpixels1_sparseU} or Multiple Endmember Spectral Mixture Analysis (MESMA)~\cite{roberts1998originalMESMA} algorithms. The MESMA algorithm is still the leading algorithm in practice due to its simplicity and interpretability, and because it provides good results if the spectral library adequately expresses the spectra contained in the scene~\cite{somers2011variabilityReview,borsoi2019EMlibManInterpVAE}. This allowed MESMA to be applied to many different environments and scenarios~\cite[p.1607]{somers2011variabilityReview}.

\subsection{\cred{Multitemporal SU and change detection}}

More recently, multitemporal SU has become a topic of great interest due to its ability to leverage temporal information in HI sequences for monitoring the evolution of the different materials and their distribution in the scene~\cite{somers2013invasiveHawaiiMultiTemporalBandWeighting,lippitt2018multidateMESMAshrublands,somers2013uncorrelatedBandSelectionInstabilityIndex,goenaga2013unmixingTimeSeriesPuertoRico}. 
Spectral variability becomes a more critical issue in multitemporal SU, when compared to EM variations within a single HI, because images are acquired at different instants~\cite{Zare-2014-ID324-variabilityReview,Borsoi_2018_Fusion}. 
\cred{Some works have proposed to extend parametric EM models developed for single-image SU to the multitemporal case~\cite{henrot2016dynamical,thouvenin2016online,sigurdsson2017sparseDistU,thouvenin2018hierarchicalBU,borsoi2020multitemporalUKalmanEM}. These works considered either uniform~\cite{henrot2016dynamical} or bandwise~\cite{borsoi2020multitemporalUKalmanEM} scaling variations of reference EMs, or additive temporal perturbations over a mean EM matrix~\cite{thouvenin2016online,sigurdsson2017sparseDistU,thouvenin2018hierarchicalBU}. Such works attempt to estimate the EMs from the data by employing batch (centralized~\cite{thouvenin2018hierarchicalBU,borsoi2020multitemporalUKalmanEM} or distributed~\cite{sigurdsson2017sparseDistU}) or online~\cite{henrot2016dynamical,thouvenin2016online} processing strategies.}
\cred{Nonetheless, the multitemporal extension} of MESMA remains the most frequently used solution in practical applications~\cite{somers2013invasiveHawaiiMultiTemporalBandWeighting,somers2013uncorrelatedBandSelectionInstabilityIndex,dudley2015multitemporalLibraryPhenologicalGradiantsMESMA,lippitt2018multidateMESMAshrublands}, \cred{and naturally addresses} both the spatial and temporal variations of EM spectra.

\cred{A closely related problem to SU consists of detecting and monitoring changes in material composition across an HI sequence~\cite{liu2019reviewCD}. While many methods are focused on detecting only abrupt changes~\cite{liu2019reviewCD,borsoi2021onlineGraphCPD}, having detailed subpixel abundance evolution from which both subtle and abrupt variations can be discerned is important for many applications~\cite{goenaga2013unmixingTimeSeriesPuertoRico,chakravortty2017CD_SU_mangrove}. This motivated the development of unmixing-based change detection strategies based on, e.g., linear~\cite{erturk2015simplesSU_CD}, sparse~\cite{erturk2015sparseSU_CD} and locally adaptive~\cite{goenaga2013unmixingTimeSeriesPuertoRico} SU, where the estimated abundance results can be analyzed to detect abrupt changes~\cite{guo2021changeDetUnmixing}.
The ability of MESMA to produce good quality abundance estimates while addressing spatio-temporal EM variability makes it an important tool for this task. It is also important to note that %
learning-based approaches have recently shown good performance in classification~\cite{hong2020graphCNN_hyperspClassif,hong2020invariantAttributeHyperspClassific} unmixing~\cite{hong2021unmixingNetAEC_selfSuperv,li2021AEC_SU_modelbased} and related tasks in hyperspectral imaging~\cite{rasti2020featureExtractionHyperspReview,zhu2017zhang2016deepLearningRemoteSensingReview}. Such approaches usually leverage the capability of deep neural networks to improve the quality of the results. However, the combination of simplicity, interpretability and robustness 
that underlies MESMA and related library-based approaches still makes them appealing, especially due to their potential for out-of-the-box solutions.}

\cred{Nevertheless, the} advantages offered by MESMA do not come without compromises. Besides depending on a spectral library, a significant drawback of MESMA lies in its high computational complexity, which increases very quickly with the size of the libraries and with the number of materials. This goes against the need for online processing of large amounts of hyperspectral data~\cite{ma2015BigDataRemoteSensing,chi2016BigDataRemoteSensing}, and motivates the search for new algorithms that are both efficient and accurate.
An alternating angle minimization (AAM) approach has been recently proposed to provide an approximate but accurate solution to the MESMA problem with a significant reduction in computational complexity for large libraries~\cite{heylen2016alternatingAngleMinimization}. 
Despite its merits, the AAM algorithm does not scale well with the number of materials in the scene, and thus may also lead to a large computational complexity in practical scenarios.
Moreover, AAM and other existing multitemporal MESMA methods do not exploit the temporal correlation between the abundance maps, nor do they account for abrupt abundance variations.

\subsection{\cred{Contributions and organization}}

In this paper, we propose a fast multitemporal SU algorithm, named FM-MESMA. \cred{Differently from previous works such as MESMA and AAM~\cite{roberts1998originalMESMA,heylen2016alternatingAngleMinimization}, the proposed method explores the abundance temporal information to deliver high-quality estimates at a lower complexity. Moreover, unlike previous change detection methods based on SU~\cite{erturk2015simplesSU_CD,goenaga2013unmixingTimeSeriesPuertoRico,guo2021changeDetUnmixing}, we integrate change detection in a specific EM selection stage of the algorithm, which allows us to account for EM variability while maintaining a small computational cost. In contrast to deep learning-based frameworks, the proposed method yields a simple, low-cost and robust solution which comes with rigorous theoretical guarantees.
The main contributions of this paper are:}

\begin{itemize}
    \item[a)] \cred{We explicitly characterize slow and abrupt abundance variations in the multitemporal mixing model. This allows us to exploit abundance temporal correlation to propose an efficient and accurate SU algorithm. Differently from previous works, this is achieved by performing SU in two separate tasks: EM selection and abundance estimation, each of which can be solved more efficiently. This significantly reduces the complexity of SU compared to MESMA or AAM, with little impact on the results.}
    
    \item[b)] \cred{We propose a methodology to detect pixels that undergo abrupt abundance changes based on the results of the EM selection task. Such pixels, which would otherwise degrade the performance of the proposed method, are then handled separately using a more sophisticated strategy. Unlike typical SU-based change detection, the changed pixels are identified by taking spectral variability into account but without solving the full SU problem, leading to a lower complexity. The resulting algorithm is robust and interpretable, and has only a single tuning parameter.}
    
    \item[c)] \cred{We derive theoretical guarantees concerning the performance of FM-MESMA, both in the presence and in the absence of changes. Specifically, we first show under which conditions the proposed method is guaranteed to recover the correct EMs from the library. Then, we derive conditions under which the FM-MESMA correctly detects abrupt abundance changes. These theoretical results provide insight into which conditions are necessary for the proposed method to reach an accurate result.}
    
    \item[d)] \cred{We provide an analysis of the computational complexity of FM-MESMA, MESMA and AAM, and show how they scale with the number of spectral bands, the size of the spectral libraries and the number of EMs.}
    
\end{itemize}

Simulation results with synthetic and real data illustrate the performance of FM-MESMA when compared to MESMA, AAM, and approaches that do not rely on spectral libraries. 

The paper is organized as follows. Section~\ref{sec:MESMA} reviews the linear mixing model, MESMA and its multitemporal extensions. Section~\ref{sec:proposedMethod} presents the multitemporal mixing model and our algorithm. Section~\ref{sec:theoremsAndAnalysis} provides theoretical guarantees for the reconstruction accuracy and robustness of FM-MESMA, and an analysis of its computational complexity. Section~\ref{sec:experimentalResults} presents simulation results and comparisons. Conclusions are presented in Section~\ref{sec:conclusions}.

\section{Spectral unmixing with MESMA} \label{sec:MESMA}

The basic idea behind the MESMA algorithm is to find the EMs and the fractional abundances that best represent each pixel with the LMM. The LMM represents each $L$-band pixel $\by_n\in\amsmathbb{R}^L$ of an HI, for $n=1,\ldots, N$, as a convex combination of the spectral signatures of $P$ endmembers:
\begin{align} \label{eq:LMM}
    &\by_n = \bM \ba_n + \be_n, %
    \,\,\, \text{s.t. }\,\cb{1}^\top\ba_n = 1 \text{ and } \ba_n \geq \cb{0} %
\end{align}
where matrix $\bM\in\amsmathbb{R}^{L\times P}$ contains the spectral signatures of the EMs as its columns $\bm_{\cred{p}}$, $\cred{p}=1,\ldots,P$, $\ba_n$ is the abundance vector, and $\be_n$ is an additive noise term.

Most SU strategies use a single spectral signature to represent each material in the scene. This can lead to significant abundance estimation errors in the presence of spectral variability.
MESMA, on the other hand, considers $P$ spectral libraries known \textit{a priori}, one for each endmember, defined as:
\begin{align} \label{eq:specLibEach}
    \calM_p = \{\bm_{p,1},\ldots,\bm_{p,C_p}\}, \,\, \bm_{p,j}\in\amsmathbb{R}^L, \,\, p=1,\ldots,P
\end{align}
where $C_p$ is the number of available variations of spectral signatures of the $p$-th endmember. Then, for each pixel $n$, only one endmember is selected from each set $\calM_p$ to compose the endmember matrix $\bM_n$ for that pixel. 

We define the set of all possible endmember matrices that can be composed this way as:
\begin{align} \label{eq:specLibFull}
	\calM {}={} \Big\{\big[\bm_1,\ldots,\bm_P\big]\,:\,
	\bm_p\in\mathcal{M}_p,\,p=1,\ldots,P\Big\}
\end{align}

Assuming $\calM$ known, the MESMA SU problem corresponds to the search for the EM matrix (also called EM model) in $\calM$ that best represents each pixel in the scene. This translates into the following optimization problem for the $n$-th pixel:
\begin{align} \label{eq:mesma}
	\min_{\bM_n\in\calM}
    &  \mathop{\min}_{\ba_{n}} \,
    \big\|\by_n - \bM_n\ba_n\big\|,
    \text{ s.t. } \,  \ba_{n} \geq\cb{0}, \, \cb{1}^\top\ba_{n} = 1.
\end{align}

Despite its widespread use and good performance in practical scenarios, the computational cost of MESMA is extremely high. Since solving \eqref{eq:mesma} amounts to perform SU for every possible matrix $\bM_n$ extracted from $\calM$, its computational complexity scales with the product of the sizes of libraries $\mathcal{M}_p$ as it consists of solving $\prod_{p=1}^P |\mathcal{M}_p|$ FCLS (Fully Constrained Least Squares) problems~\cite{heylen2016alternatingAngleMinimization}.

Several works have attempted to circumvent this limitation by seeking approximate solutions to~\eqref{eq:mesma}.
The first approaches consisted of solving~\eqref{eq:mesma} for matrices $\bM_n$ randomly chosen from $\calM$ until obtaining a reconstruction error below a threshold and well distributed across all spectral bands~\cite{roberts1998originalMESMA}.
Another approach ignores both constraints in~\eqref{eq:mesma} and performs unconstrained least squares for every possible $\bM_n \in \calM$, and then selects the EM model resulting in the smallest reconstruction error without any negative abundances~\cite{combe2008MELSUMunmixingMARS}.

Although these approaches are simple, \cred{they are not guaranteed to achieve a good accuracy at a reduced processing cost, as only a relatively small subset of $\calM$ could be tested. Also, ignoring the abundance constraints can make the results more sensitive to noise.}
Recent strategies attempt to provide low-complexity alternatives to MESMA with minimal impact on unmixing results.
For instance, the approach in~\cite{heylen2016alternatingAngleMinimization} employs an angle minimization strategy.  A significant reduction of the computational complexity is obtained for $P$ small and $C_{\cred{p}}$ possibly very large, with unmixing accuracy similar to MESMA.
Another approach formulates problem~\eqref{eq:mesma} as a mixed-integer optimization problem in order to benefit from advanced software packages~\cite{mhenni2018MESMA_MILP}.
However, none of these works consider the multitemporal formulation of MESMA.

MESMA has recently been applied to multitemporal SU problems such as monitoring of rainforests~\cite{somers2013invasiveHawaiiMultiTemporalBandWeighting,somers2013uncorrelatedBandSelectionInstabilityIndex,dudley2015multitemporalLibraryPhenologicalGradiantsMESMA} and shrublands~\cite{lippitt2018multidateMESMAshrublands}. Some of these methods improve MESMA performance by employing strategies such as band selection and weighting~\cite{somers2013uncorrelatedBandSelectionInstabilityIndex,somers2013invasiveHawaiiMultiTemporalBandWeighting}, or library construction from multiple time instants~\cite{dudley2015multitemporalLibraryPhenologicalGradiantsMESMA}. However, they do not explicitly explore temporal correlation between the abundance maps at adjacent time instants.
In the following, we propose a model for the evolution of the abundance maps over time, accounting for both small and large variations. This will allow us to devise an efficient algorithm to address multitemporal SU problems.

\section{Fast multitemporal MESMA} \label{sec:proposedMethod}

The multitemporal SU problem can be introduced with generic terms as follows: given a sequence of image pixels~$\{\by_{t,n}\}$, for $t=1,\ldots,T$ time instants and $n=1,\ldots,N$ pixels, and a spectral library~$\calM$ as defined in~\eqref{eq:specLibFull}, estimate the corresponding fractional abundances~$\{\ba_{t,n}\}$ and EM models~$\bM_{t,n}$. For simplicity, we shall assume that all the images are spatially aligned, such that for each $\cred{n}\in\{1,\ldots,N\}$ the pixels $\by_{t,\cred{n}}$, $t=1,\ldots,T$ refer to the same spatial location.
A simple solution to this problem would be to directly apply the techniques discussed in Section~\ref{sec:MESMA} to each pixel individually. This, however, ignores important temporal information contained in the image sequence which can be used in order to devise an efficient algorithm.

We propose to model the evolution of the abundance maps by considering its changes to be composed of a small additive signal, and of large sparse changes. That is, the observation model represents the pixels at time $t$ and $t+1$ as follows:
\begin{subequations} \label{eq:multit_model_i}
\begin{align}
	\by_{t,n} & {}={} \bM_{t,n} \ba_{t,n} + \be_{t,n},
	\label{eq:multit_model_ia}
    \\
    \by_{t+1,n} & {}={} \bM_{t+1,n} \big(\ba_{t,n} + \bdelta_{t,n} + \bs_{t,n}\big) + \be_{t+1,n}
    \label{eq:multit_model_ib}
    \\
    & {}={} \bM_{t+1,n} \ba_{t+1,n} + \be_{t+1,n},
    \nonumber 
\end{align}
\end{subequations}
where $\bM_{t,n}$ is the (true) endmember matrix for pixel $n$ at time instant $t$, and $\be_{t,n}$ is an additive noise vector. Changes taking place in the abundances between time $t$ and time $t+1$ are modeled as a combination of a small magnitude term $\bdelta_{t,n}$ and a spatially sparse, high magnitude term $\bs_{t,n}$, which represents abrupt variations taking place in a small number of image pixels. Note that, although model~\eqref{eq:multit_model_i} dictates the relationship between a single pair of images, the extension to multiple images is trivial and is thus omitted here for simplicity.

It turns out that the structure outlined in the model~\eqref{eq:multit_model_i} can be explored in order to devise an efficient MESMA-based SU algorithm.
We propose to use an online strategy to estimate the abundances and the EM matrices at time instant $t+1$ based on an estimate $\widehat{\ba}_{t,n}$ of the abundances at time instant~$t$. The procedure is a two-step one:
\begin{enumerate}
    \item Considering $\ba_{t,n}\equiv\widehat{\ba}_{t,n}$, estimate $\bM_{t+1,n}$, $\bdelta_{t,n}$ and $\bs_{t,n}$ that best represent pixel $\by_{t+1,n}$ in the model~\eqref{eq:multit_model_i};
    \item Set $\widehat{\ba}_{t+1,n}=\widehat{\ba}_{t,n}+\widehat{\bdelta}_{t,n}+\widehat{\bs}_{t,n}$ and repeat for the next image.
\end{enumerate}

Taking into consideration the prior information stated about the properties of $\bdelta_{t,n}$ and $\bs_{t,n}$, and the knowledge of the spectral library $\calM$, this can be translated 
\cred{into the following objectives and constraints for an optimization problem:
\begin{itemize}
    \item Minimize the reconstruction error w.r.t. $\bM\in\mathcal{M}$, $\bdelta_{t,n}$, and $\bs_{t,n}$, given by $\big\|\by_{t+1,n} - \bM \big(\widehat{\ba}_{t,n} + \bdelta_{t,n} + \bs_{t,n}\big)\big\|$;
    \item Preserve the nonnegativity and sum-to-one constraints on $\ba_{t+1,n}=\widehat{\ba}_{t,n}+\bdelta_{t,n}+\bs_{t,n}$;
    \item Consider that $\|\bdelta_{t,n}\|$ should be small and $\bs_{t,n}$ spatially sparse (i.e., nonzero only in a small number of pixels).
\end{itemize}}

\cred{Rather than accounting for all objectives outlined above at once by devising a single computationally demanding optimization problem,} we adopt an alternative strategy to obtain an efficient solution.
First, let us assume that $\bs_{t,n}=\cb{0}$. If $\bdelta_{t,n}$ is sufficiently small and $\widehat{\ba}_{t,n}$ is a good estimate of the true abundance $\ba_{t,n}$, we have:
\begin{align} \label{eq:approximation_a1}
\begin{split}
    \ba_{t,n}+\bdelta_{t,n} & {}\approx{} \ba_{t,n}
    \\
    & {}\approx{} \widehat{\ba}_{t,n}.
\end{split}
\end{align}
In that case, we can isolate the problem of estimating $\bM\in\calM$ \cred{from that of estimating $\bdelta_{t,n}$ and $\bs_{t,n}$} in order to solve it separately in a much simpler manner. We formulate the optimization problem as follows:
\begin{align} \label{eq:opt_appr_perm_i}
	RE_{t+1,n} {}={} & \min_{\bM\in\mathcal{M}} 
    \,\,\big\| \by_{t+1,n} - \bM\, \widehat{\ba}_{t,n} \big\| \,.
\end{align}
The endmember matrix $\widehat{\bM}_{t+1,n}$ obtained by solving problem~\eqref{eq:opt_appr_perm_i} can then be used to compute abundance vector $\widehat{\ba}_{t+1,n}$ with a single run of the FCLS algorithm.

However, this strategy relies on a strong hypothesis, namely, $\bs_{t,n}=\cb{0}$ and $\widehat{\ba}_{t,n}\approx\ba_{t,n}$, in order for the approximation in~\eqref{eq:approximation_a1} to hold. This hypothesis may not be satisfied for all pixels.
Fortunately, it turns out that we can devise a simple strategy to address those cases without significantly compromising the performance of the algorithm. Specifically, by evaluating the magnitude of the reconstruction error $RE_{t+1,n}$ in~\eqref{eq:opt_appr_perm_i}, we can indirectly identify if there were any significant changes in the abundance vector by testing whether $RE_{t+1,n}$ is larger than a given threshold $RE_0$ and, if so, estimate the corresponding abundance vector from scratch using MESMA \cred{or, alternatively, an algorithm such as AAM}.

Intuitively, the reason this works is that, if the spectral library $\calM$ is not too large, and if $\ba_{t+1,n}\approx\widehat{\ba}_{t,n}$ is not satisfied, then we cannot accurately reconstruct $\by_{t+1,n}$ from problem~\eqref{eq:opt_appr_perm_i}. In the next section, we shall formalize this intuition by providing a deeper theoretical analysis of this method. 
\cred{Setting the threshold $RE_0$ offers a trade-off between accuracy (which tends to MESMA's for small $RE_0$) and computational performance. We propose to choose it as:
\begin{align}
    RE_0 = \frac{K}{U} \sum_{\by\in\calU} \Big( \min_{\bM\in\calM} \,\mathop{\min}_{{\ba\geq\cb{0},\, \cb{1}^\top\ba=1}} \big\|\by-\bM\ba\big\| \Big) \,,
    \label{eq:threshold_param}
\end{align}
where $\calU=\{\by_1,\ldots,\by_U\}$ is a set with $U$ pixels and $K\in\amsmathbb{R}_+$. The pixels in $\calU$ should be similar to those contained in the image sequence $\{\by_{t,n}\}$ to be unmixed, so that $RE_0$ approaches $K$ times the average optimal reconstruction error of the data. The proportion $K$ controls how much the $RE_{t+1,n}$ of~\eqref{eq:opt_appr_perm_i} can deviate from the estimated optimal value before we decide an abrupt change occurred.
The inner optimization problem in~\eqref{eq:threshold_param} can be solved using MESMA or AAM.}
The complete procedure is detailed in Algorithm~\ref{alg:proposed_alg}.

\begin{algorithm} [thb]
\small
\SetKwInOut{Input}{Input}
\SetKwInOut{Output}{Output}
\caption{FM-MESMA Algorithm~\label{alg:proposed_alg}}
\Input{Multitemporal HS images $\{\by_{t,n}\}$, endmember library~$\calM$, threshold proportion parameter~$K$.}
Perform SU for the first HI $\by_{1,n}$, using the \cred{MESMA or AAM algorithms} to obtain $\widehat{\ba}_{1,n}$ \cred{and $\widehat{\bM}_{1,n}$,} for $n=1,\ldots,N$\;
Compute $RE_0$ \cred{according to~\eqref{eq:threshold_param}} \;
Initialize the change maps indicator function as $\widehat{\biota}_{\bs,t,n}=0$, $t=1,\ldots,T$, $n=1,\ldots,N$ \;
\For{$t=1,\ldots,T-1$}{
\For{$n=1,\ldots,N$}{
Solve problem~\eqref{eq:opt_appr_perm_i} to obtain the EM matrix~$\widehat{\bM}_{t+1,n}$ and the reconstruction error $RE_{t+1,n}$ \;
\uIf{$RE_{t+1,n}\leq RE_0$}{
Estimate $\widehat{\ba}_{t+1,n}$ using the FCLS method with $\widehat{\bM}_{t+1,n}$ as the EM matrix\;
}\Else{
Estimate $\widehat{\ba}_{t+1,n}$ and $\widehat{\bM}_{t+1,n}$ using MESMA \cred{or AAM and set} $\widehat{\biota}_{\bs,t,n}=1$ \;
}
}}
\KwRet \cred{Estimated abundances $\big\{\widehat{\ba}_{t,n}\big\}$, detected change maps $\big\{\widehat{\biota}_{\bs,t,n}\big\}$, and endmembers $\big\{\widehat{\bM}_{t,n}\big\}$} \;
\end{algorithm}

\section{Theoretical guarantees} \label{sec:theoremsAndAnalysis}

\cred{FM-MESMA (Algorithm~\ref{alg:proposed_alg}) relies on important assumptions (such as~\eqref{eq:approximation_a1}) in order to split SU into EM selection and abundance estimation, and to correctly detect abundance changes while maintaining a small complexity.
This raises questions regarding how the accuracy of the method is affected by the underlying assumptions and by the different variables involved in the model, such as: the noise $\be_{t,n}$, the small and abrupt abundance changes $\bdelta_{t,n}$ and $\bs_{t,n}$, the library $\calM$, and the accuracy of the estimated abundances at the previous time instants.
To investigate these questions, in this section we derive theoretical results in the form of two theorems, which provide conditions under which 1) the EM matrix can be correctly recovered from~\eqref{eq:opt_appr_perm_i}, and 2) the abrupt abundance changes $\bs_{t,n}$ can be correctly identified based on the reconstruction error $RE_{t+1,n}$.
}
\cred{Informally, the main findings related to each theorem can be summarized as:
\begin{enumerate}
    \item When $\bs_{t,n}=\cb{0}$, FM-MESMA recovers the correct EMs from~\eqref{eq:opt_appr_perm_i} if the pairwise difference between the signatures in each library $\calM_p$ is sufficiently large when compared to a measure related to the coherence between signatures at different libraries, to the noise $\be_{t,n}$ and to the abundance temporal variations $\bdelta_{t,n}$ (or to the errors in $\widehat{\ba}_{t,n}$).
    \item When an abrupt abundance change occurs, the value of $RE_{t+1,n}$ will be significantly larger than when $\bs_{t,n}=\cb{0}$ (thus making it easy to detect) as long as $\bs_{t,n}$ is sufficiently large compared to $\bdelta_{t,n}$, to $\be_{t,n}$, and to the maximum pairwise difference between the signatures of each EM in the library.
\end{enumerate}}
To proceed further, we shall assume that the true EM matrix $\bM_{t+1,n}$ in~\eqref{eq:multit_model_i} is an element of the library~$\calM$\footnote{Without loss of generality, one can always assume that $\bM_{t+1,n}\in\calM$ by incorporating any error $\Delta\bM_{t+1,n}$ into the additive noise term $\be_{t+1,n}$ \cred{by} adding $\Delta\bM_{t+1,n}\ba_{t+1,n}$.}.

\begin{theorem} \label{thm:theorem1}
Let us assume that $\widehat{\ba}_{t,n}=\ba_{t,n}$, $\bs_{t,n}=\cb{0}$ and that $\bM_{t+1,n}\in\calM$. Also assume that $\|\be_{t,n}\|<\Omega_e$, that $\max_{\bM\in\mathcal{M}}\|\bM\bdelta_{t,n}\|<\Omega_{\delta}$ and that \cred{the libraries $\calM_p$ satisfy:}
\begin{align}
    & \cred{\min_p \min_{\substack{\bm,\bm'\in\calM_p \\ \bm\neq\bm'}} \,\big\|\bm'-\bm\big\|^2 > \Omega_M}
    \label{eq:thm2_libCond1}
    \\
    & \cred{\max_{p,q} \max_{\substack{\bm,\bm'\in\calM_p \\ \widetilde{\bm},\widetilde{\bm}'\in\calM_q}} \, \big|\langle\bm'-\bm,\widetilde{\bm}'-\widetilde{\bm}\rangle\big| \leq \mu }
    \label{eq:thm2_libCond2}
\end{align}
\cred{Then, if $2\sqrt{P}(\Omega_e+\Omega_{\delta})<\sqrt{\Omega_M-(P-1)\mu}$} the solution of the optimization problem~\eqref{eq:opt_appr_perm_i} is~$\bM=\bM_{t+1,n}$.
\end{theorem}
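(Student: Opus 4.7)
The plan is to compare the reconstruction error of the candidate $\bM_{t+1,n}$ against that of any competing $\bM\in\calM$ and show the latter is strictly larger. First, I would substitute the model~\eqref{eq:multit_model_i} (with $\bs_{t,n}=\cb{0}$) and $\widehat{\ba}_{t,n}=\ba_{t,n}$ into the objective of~\eqref{eq:opt_appr_perm_i}, to write
\begin{equation*}
\by_{t+1,n}-\bM\ba_{t,n}=(\bM_{t+1,n}-\bM)\ba_{t,n}+\bM_{t+1,n}\bdelta_{t,n}+\be_{t+1,n}.
\end{equation*}
Setting $\bM=\bM_{t+1,n}$ and applying the triangle inequality together with $\|\be_{t+1,n}\|<\Omega_e$ and $\|\bM_{t+1,n}\bdelta_{t,n}\|<\Omega_\delta$ gives $RE_{t+1,n}<\Omega_e+\Omega_\delta$. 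For any other $\bM\in\calM$, the reverse triangle inequality yields $\|\by_{t+1,n}-\bM\ba_{t,n}\|\geq\|(\bM_{t+1,n}-\bM)\ba_{t,n}\|-(\Omega_e+\Omega_\delta)$, so it suffices to establish $\|(\bM_{t+1,n}-\bM)\ba_{t,n}\|>2(\Omega_e+\Omega_\delta)$ for every $\bM\neq\bM_{t+1,n}$.

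The core step is the lower bound on this discrimination term using the library conditions~\eqref{eq:thm2_libCond1}--\eqref{eq:thm2_libCond2}. Let $\bd_p$ denote the difference between the $p$-th columns of $\bM_{t+1,n}$ and $\bM$, let $S=\{p:\bd_p\neq\cb{0}\}$, and let $a_p$ denote the $p$-th entry of $\ba_{t,n}$. I would expand
\begin{equation*}
\|(\bM_{t+1,n}-\bM)\ba_{t,n}\|^2=\sum_{p\in S}a_p^2\|\bd_p\|^2+\sum_{\substack{p,q\in S\\p\neq q}}a_p a_q\,\langle\bd_p,\bd_q\rangle,
\end{equation*}
apply the diagonal bound $\|\bd_p\|^2>\Omega_M$ from~\eqref{eq:thm2_libCond1} and the cross bound $|\langle\bd_p,\bd_q\rangle|\leq\mu$ from~\eqref{eq:thm2_libCond2}, and use $a_p\geq 0$ to get $\|(\bM_{t+1,n}-\bM)\ba_{t,n}\|^2\geq(\Omega_M+\mu)\sum_{p\in S}a_p^2-\mu\big(\sum_{p\in S}a_p\big)^{\!2}$. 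A Cauchy--Schwarz step $\sum_{p\in S}a_p^2\geq(\sum_{p\in S}a_p)^{2}/|S|$, combined with the monotonicity of the resulting bound in $|S|\leq P$ and the sum-to-one constraint on $\ba_{t,n}$, reduces this to the cleaner lower bound $(\Omega_M-(P-1)\mu)/P$. The hypothesis $2\sqrt{P}(\Omega_e+\Omega_\delta)<\sqrt{\Omega_M-(P-1)\mu}$ then gives the desired strict inequality, proving the claim.

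The hardest part is this fourth step: its tightness depends on how abundance mass is distributed, since if most of $\ba_{t,n}$ sits on indices where $\bd_p=\cb{0}$ the two EM matrices become nearly indistinguishable and $\bM_{t+1,n}$ is merely one of several equally good solutions of~\eqref{eq:opt_appr_perm_i}. The clean $(\Omega_M-(P-1)\mu)/P$ bound is achieved in the most adversarial configuration (full mass on $S$, uniformly distributed); in other regimes one absorbs a factor $\big(\sum_{p\in S}a_p\big)^{\!2}\leq 1$ into the estimate, so the coherence condition~\eqref{eq:thm2_libCond2} must be handled carefully to avoid degradation with $|S|$. Any implicit nondegeneracy assumption on $\ba_{t,n}$ needed to rule out ties enters exactly here and should be acknowledged when phrasing uniqueness.
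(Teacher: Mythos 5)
Your proposal is correct in substance and arrives at the same final bound $\|(\bM_{t+1,n}-\bM)\ba_{t,n}\|^2\geq(\Omega_M-(P-1)\mu)/P$, but by a genuinely different route. The paper applies the Ger\v{s}hgorin circle theorem to the Gram matrix $\bXi=(\bM_{t+1,n}-\bM)^\top(\bM_{t+1,n}-\bM)$ to lower-bound $\sigma_{\min}(\bM_{t+1,n}-\bM)$ by $\sqrt{\Omega_M-(P-1)\mu}$, and then combines this with the simplex inequality $\|\ba_{t,n}\|\geq 1/\sqrt{P}$; you instead expand the quadratic form $\ba_{t,n}^\top\bXi\,\ba_{t,n}$ directly, bound the diagonal and cross terms with conditions~\eqref{eq:thm2_libCond1}--\eqref{eq:thm2_libCond2}, and finish with Cauchy--Schwarz. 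Both are diagonal-dominance arguments, but yours is more elementary and, importantly, more transparent: by tracking the support $S=\{p:\bd_p\neq\cb{0}\}$ and the mass $\sum_{p\in S}a_p$ you expose an implicit assumption that the paper's proof silently makes. The paper asserts $\min_p\|\bm_{t+1,n}^p-(\bm')^p\|^2>\Omega_M$ for \emph{any} $\bM'\neq\bM_{t+1,n}$, but condition~\eqref{eq:thm2_libCond1} only controls \emph{distinct} pairs within a library; if $\bM'$ shares some columns with $\bM_{t+1,n}$ the corresponding $\Xi_{pp}$ vanish and the Ger\v{s}hgorin bound collapses, exactly the degeneracy you flag when the abundance mass sits on indices outside $S$ (in the extreme case $\sum_{p\in S}a_p=0$ the two models are indistinguishable and uniqueness genuinely fails). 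So your argument is at least as rigorous as the paper's, and your closing caveat identifies a real, shared limitation of the theorem as stated rather than a defect specific to your approach.
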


\begin{proof}
Under these \cred{assumptions}, optimization problem~\eqref{eq:opt_appr_perm_i} can be written equivalently as
\begin{align} \label{eq:theorem1_deriv_i}
	& \min_{\bM\in\mathcal{M}} \,\, 
	\big\| \by_{t+1,n} - \bM\, \widehat{\ba}_{t,n} \big\|
	\nonumber \\
	={} & \min_{\bM\in\mathcal{M}} \,\, 
    \big\|\bM_{t+1,n}(\ba_{t,n}+\bdelta_{t,n})+\be_{t+1,n} - \bM \ba_{t,n}\big\|
    \\ \nonumber 
    ={} & \min_{\bM\in\mathcal{M}} \,\, 
    \big\|(\bM_{t+1,n}-\bM)\ba_{t,n}+\bM_{t+1,n}\bdelta_{t,n}+\be_{t+1,n}\big\| \,.
\end{align}
In order for the solution $\bM^*$ to problem~\eqref{eq:opt_appr_perm_i} to be the correct endmember model, we need the reconstruction error to be minimal only when $\bM_{t+1,n}-\bM^*=\cb{0}$. In that case, the reconstruction error is: $RE_{t+1,n}^*=\|\bM_{t+1,n}\bdelta_{t,n}+\be_{t+1,n}\|$. Equivalently, this means that $\forall\,\bM'\in\mathcal{M}, \,\bM'\neq\bM_{t+1,n}$, the following condition must be verified:
\begin{align} \label{eq:theorem1_deriv_ii}
    \big\| \bM_{t+1,n} & \bdelta_{t,n}+\be_{t+1,n} \big\|
    \\
    & {}<{} \big\| (\bM_{t+1,n}-\bM')\ba_{t,n}+\bM_{t+1,n}\bdelta_{t,n}+\be_{t+1,n} \big\|.
    \nonumber %
\end{align}

Using the Reverse Triangle Inequality with the r.h.s. of the above expression leads to:
\begin{align}
    \label{eq:theorem1_deriv_ii_2}
    & \big\| (\bM_{t+1,n}-\bM')\ba_{t,n} \big\| - \big\| \bM_{t+1,n}\bdelta_{t,n}+\be_{t+1,n} \big\|
    \\
    & \qquad \leq \big\| (\bM_{t+1,n}-\bM')\ba_{t,n}+\bM_{t+1,n}\bdelta_{t,n}+\be_{t+1,n} \big\| \,.
    \nonumber
\end{align}

Now, we can lower bound the l.h.s. of the previous expression. First, note that by hypothesis we have: 
\begin{align}
    \big\| \bM_{t+1,n}\bdelta_{t,n}+\be_{t+1,n} \big\| & {}\leq{} \|\bM_{t+1,n}\bdelta_{t,n}\|+\|\be_{t+1,n}\| 
    \nonumber \\
    & {}<{} \Omega_e+\Omega_{\delta} \,.
    \nonumber
\end{align}
\cred{Before proceeding to bound the first term in~\eqref{eq:theorem1_deriv_ii_2}, let us first derive the following auxiliary result concerning the matrix $\bM_{t+1,n}-\bM'$.}
\cred{
Since $L>P$, the Ger\v{s}hgorin circle theorem~\cite{horn2012matrixbook,richard2008online} can be used to derive a lower bound on the smallest singular value of $\bM_{t+1,n}-\bM'$ as follows:
\begin{align}
    \sigma_{\min}^2(\bM_{t+1,n}-\bM') 
    & = \lambda_{\min}\big(\bXi\big)
    \nonumber \\
    & \geq \min_p \,\, \Xi_{pp} - \sum_{j: j\neq p} |\Xi_{pj}|\,,
    \label{eq:gershgorin_i}
\end{align}
for $\bXi=\big(\bM_{t+1,n}-\bM'\big)^\top\big(\bM_{t+1,n}-\bM'\big)$ with $\Xi_{ij}$ being its $(i,j)$-th element, where we used the fact that $\bXi$ is diagonalizable. Functions $\sigma_{\min}(\cdot)$ and $\lambda_{\min}(\cdot)$ denote the smallest singular value and the smallest eigenvalue of a matrix, respectively. 
Using condition~\eqref{eq:thm2_libCond2} (i.e., the differences between signatures belonging to distinct EM libraries have low coherence), the second term in~\eqref{eq:gershgorin_i} satisfies
\begin{align}
    \sum_{j: j\neq p} |\Xi_{pj}| &= \sum_{j: j\neq p} \big|\langle\bm_{t+1,n}^p-(\bm')^p,  \bm_{t+1,n}^j-(\bm')^j\rangle\big|
    \nonumber\\
    & \leq (P-1)\mu \,,
\end{align}
where $\bm_{t+1,n}^p$ and $(\bm')^p$ denote the $p$-th column of $\bM_{t+1,n}$ and $\bM'$, respectively. This leads to the following lower bound of~\eqref{eq:gershgorin_i}:
\begin{align}
    \min_p \, \Xi_{pp} -\!\! \sum_{j: j\neq p} |\Xi_{pj}| & \geq \min_p \, \big\|\bm_{t+1,n}^p-(\bm')^p\big\|^2 - (P-1)\mu
    \nonumber \\
    & > \Omega_M - (P-1)\mu \,,
\end{align}
since $\min_p\big\|\bm_{t+1,n}^p-(\bm')^p\big\|^2>\Omega_M$ due to condition~\eqref{eq:thm2_libCond1}. Since $\Omega_M-(P-1)\mu>0$, by combining the above equation with~\eqref{eq:gershgorin_i} and taking the square root we can bound $\sigma_{\min}(\bM_{t+1,n}-\bM')$ as:
\begin{align}
    \sqrt{\Omega_M - (P-1)\mu} & < \sigma_{\min}(\bM_{t+1,n}-\bM') \,.
    \label{eq:bound_sigmamin_i}
\end{align}
Due to $\ba_{t,n}$ being confined to the unit simplex, its L$_2$ norm satisfies $(1/\sqrt{P})\leq\|\ba_{t,n}\|$. When combined with~\eqref{eq:bound_sigmamin_i}, this leads to:
\begin{align}
    (1/\sqrt{P}) \sqrt{\Omega_M - (P-1)\mu} & < (1/\sqrt{P}) \sigma_{\min}(\bM_{t+1,n}-\bM') 
    \nonumber \\
    & \leq \|\ba_{t,n}\| \sigma_{\min}(\bM_{t+1,n}-\bM')
    \nonumber\\
    &\leq \|(\bM_{t+1,n}-\bM')\ba_{t,n}\|.
\end{align}
This last equation constitutes the desired bound on the first term of~\eqref{eq:theorem1_deriv_ii_2}. Note that by hypothesis $2(\Omega_e+\Omega_{\delta})<(1/\sqrt{P}) \sqrt{\Omega_M-(P-1)\mu}$, which implies that:}
\begin{align}
    2\big\| \bM_{t+1,n}\bdelta_{t,n}+\be_{t+1,n} \big\| {}<{} \big\| (\bM_{t+1,n}-\bM')\ba_{t,n} \big\|
    \nonumber \,.
\end{align}
Finally, subtracting $\|\bM_{t+1,n}\bdelta_{t,n}+\be_{t+1,n}\|$ from both sides of the previous expression leads to:
\begin{align}
	\big\| \bM_{t+1,n} & \bdelta_{t,n}+\be_{t+1,n} \big\| 
    \nonumber \\
    {}<{} & \big\|(\bM_{t+1,n}-\bM')\ba_{t,n}\| - \|\bM_{t+1,n}\bdelta_{t,n}+\be_{t+1,n} \big\|
	\nonumber \\
    {}\leq{} & \big\|(\bM_{t+1,n}-\bM')\ba_{t,n}+\bM_{t+1,n}\bdelta_{t,n}+\be_{t+1,n} \big\| \,,
    \nonumber
\end{align}
for any $\bM'\in\mathcal{M}$, $\bM'\neq\bM_{t+1,n}$, which is precisely the necessary condition stated in equation~\eqref{eq:theorem1_deriv_ii}. Therefore, any $\bM'\neq\bM_{t+1,n}$ leads to a larger reconstruction error than using $\bM=\bM_{t+1,n}$, which is the optimal solution to~\eqref{eq:opt_appr_perm_i}.
\end{proof}

The rationale behind Theorem~\ref{thm:theorem1} is that, when the changes between the abundances $\ba_{t,n}$ and $\ba_{t+1,n}$ are small, then the variations observed between the pixel spectra $\by_{t,n}$ and $\by_{t+1,n}$ are due to 1) changes between the EM signatures $\bM_{t,n}$ and $\bM_{t+1,n}$, and 2) to the presence of additive noise. Thus, if the signal to noise ratio is high, which is common in HIs, and the EM matrices contained in the library $\calM$ are sufficiently different from one another, then the EM matrix in~$\calM$ that best reconstructs the new image $\by_{t+1,n}$ (with $\ba_{t+1,n}\approx\widehat{\ba}_{t,n}$) as the optimal solution to problem~\eqref{eq:opt_appr_perm_i} will be the correct EM matrix $\bM_{t+1,n}$, since the increase in reconstruction error due to the choice of any other matrix in $\calM$ will be greater than that due to the noise and abundance variations.

Although Theorem \ref{thm:theorem1} clarifies the circumstances under which the optimization problem~\eqref{eq:opt_appr_perm_i} provides the correct EM matrix, another important question is whether the detection strategy in \cred{FM-MESMA} can correctly identify the cases when it fails, so that the solution must then be processed using MESMA. Recall this mechanism allows the algorithm to deal with abrupt abundance variations.
It turns out that under similar assumptions to those made in Theorem~\ref{thm:theorem1}, if the EM matrices contained in~$\calM$ are not too different from one another, then abrupt abundance variations can be correctly identified by means of the behavior of the reconstruction error of problem~\eqref{eq:opt_appr_perm_i}, as stated in the following theorem.

\begin{theorem} \label{thm:theorem2}
Assume that $\widehat{\ba}_{t,n}=\ba_{t,n}$, that $\bs_{t,n}\neq\cb{0}$ and satisfies $\min_{\bM\in\mathcal{M}}\|\bM\bs_{t,n}\|>\Omega_s$ and that $\bM_{t+1,n}\in\calM$. Assume that $\|\be_{t,n}\|<\Omega_e$, that $\max_{\bM\in\mathcal{M}}\|\bM\bdelta_{t,n}\|<\Omega_{\delta}$ and that \cred{the maximum distances between signatures in each EM library is bounded as
\begin{align}
    \max_{p}\max_{\bm,\bm'\in\calM_p} \big\|\bm'-\bm\big\|<\Omega_M'.
\end{align}}
Then, if $\cred{\sqrt{P}\,\Omega_M'}+\Omega_{\delta}+\Omega_e<\Omega_s/(\cred{F}+1)$ the reconstruction error will be at least \cred{$F$} times larger than if $\bs_{t,n}=\cb{0}$.
\end{theorem}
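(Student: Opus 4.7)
The plan is to lower-bound $RE_{t+1,n}$ in the presence of an abrupt change ($\bs_{t,n}\neq\cb{0}$) and compare it against an upper bound on the reconstruction error that would have been obtained under the same quantities but with $\bs_{t,n}=\cb{0}$. First I would expand~\eqref{eq:opt_appr_perm_i} using the observation model~\eqref{eq:multit_model_i} together with $\widehat{\ba}_{t,n}=\ba_{t,n}$, obtaining $RE_{t+1,n}=\min_{\bM\in\calM}\big\|(\bM_{t+1,n}-\bM)\ba_{t,n}+\bM_{t+1,n}(\bdelta_{t,n}+\bs_{t,n})+\be_{t+1,n}\big\|$. The reverse triangle inequality then yields, for every $\bM\in\calM$,
\begin{align*}
RE_{t+1,n}\geq\|\bM_{t+1,n}\bs_{t,n}\|-\|(\bM_{t+1,n}-\bM)\ba_{t,n}\|-\|\bM_{t+1,n}\bdelta_{t,n}\|-\|\be_{t+1,n}\|.
\end{align*}

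Next I would control the only nontrivial term on the right, namely $\|(\bM_{t+1,n}-\bM)\ba_{t,n}\|$. Since $\ba_{t,n}$ lies on the unit simplex we have $\|\ba_{t,n}\|_2\leq 1$, and chaining the operator-norm inequality with the Frobenius-norm bound gives $\|(\bM_{t+1,n}-\bM)\ba_{t,n}\|\leq\|\bM_{t+1,n}-\bM\|_F$. The hypothesis that each column of $\bM_{t+1,n}-\bM$ (a difference of two signatures from the same library $\calM_p$) has norm smaller than $\Omega_M'$ then produces $\|\bM_{t+1,n}-\bM\|_F<\sqrt{P}\,\Omega_M'$. Combining this with the assumed bounds $\|\bM_{t+1,n}\bs_{t,n}\|>\Omega_s$, $\|\bM_{t+1,n}\bdelta_{t,n}\|<\Omega_{\delta}$, and $\|\be_{t+1,n}\|<\Omega_e$ yields $RE_{t+1,n}>\Omega_s-\sqrt{P}\,\Omega_M'-\Omega_{\delta}-\Omega_e$, and invoking the hypothesis $\sqrt{P}\,\Omega_M'+\Omega_{\delta}+\Omega_e<\Omega_s/(F+1)$ I obtain $RE_{t+1,n}>F(\sqrt{P}\,\Omega_M'+\Omega_{\delta}+\Omega_e)$.

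To close the comparison, I would observe that when $\bs_{t,n}=\cb{0}$ (all other quantities unchanged), the feasible choice $\bM=\bM_{t+1,n}\in\calM$ in~\eqref{eq:opt_appr_perm_i} already yields a reconstruction error of at most $\|\bM_{t+1,n}\bdelta_{t,n}+\be_{t+1,n}\|<\Omega_{\delta}+\Omega_e\leq\sqrt{P}\,\Omega_M'+\Omega_{\delta}+\Omega_e$, which therefore upper-bounds the no-change minimum $RE'_{t+1,n}$. Putting the two bounds side by side gives $RE_{t+1,n}>F\cdot RE'_{t+1,n}$, which is the asserted multiplicative gap.

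The main obstacle, in my view, is choosing the right inequality for $\|(\bM_{t+1,n}-\bM)\ba_{t,n}\|$ so that the resulting constant matches the $\sqrt{P}$ that appears in the theorem statement: a columnwise triangle-inequality argument together with $\|\ba_{t,n}\|_1=1$ would yield only $\Omega_M'$ (a sharper bound), whereas the $\ell_2$ route via $\|\cdot\|_{\mathrm{op}}\leq\|\cdot\|_F$ combined with $\|\ba_{t,n}\|_2\leq 1$ is what delivers exactly $\sqrt{P}\,\Omega_M'$ as required by the hypothesis. Beyond this choice, the remaining steps are direct applications of the triangle inequality and the standing assumptions.
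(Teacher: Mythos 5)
Your proposal is correct and follows essentially the same route as the paper: expand $RE_{t+1,n}$ via the observation model, apply the reverse triangle inequality to isolate $\|\bM_{t+1,n}\bs_{t,n}\|$, and bound $\|(\bM_{t+1,n}-\bM)\ba_{t,n}\|$ by $\sqrt{P}\,\Omega_M'$ through $\|\cdot\|_{\mathrm{op}}\leq\|\cdot\|_F$ with $\|\ba_{t,n}\|_2\leq 1$, exactly as the paper does. The only (immaterial) difference is that you close the argument by upper-bounding the no-change minimum via the feasible choice $\bM=\bM_{t+1,n}$, whereas the paper compares the with-change and no-change errors at the same arbitrary $\bM$; both yield the stated factor-$F$ gap.
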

\begin{proof}
Under these \cred{assumptions}, the optimization problem~\eqref{eq:opt_appr_perm_i} can be written equivalently as:
\begin{align}
	& \min_{\bM\in\mathcal{M}} \,\, 
	\big\| \by_{t+1,n} - \bM\, \widehat{\ba}_{t,n} \big\|
	\nonumber \\
	={} & \min_{\bM\in\mathcal{M}} \,\, 
    \big\| \bM_{t+1,n}(\ba_{t,n}+\bdelta_{t,n}+\bs_{t,n})+\be_{t+1,n} - \bM \ba_{t,n} \big\|
    \nonumber 
\end{align}
Using the Reverse Triangle Inequality, we have:
\begin{align} \label{eq:proof_s_i}
	& \!\!\! \big\| \bM_{t+1,n}\bs_{t,n} \big\| 
    - \big\| \bM_{t+1,n}(\ba_{t,n}+\bdelta_{t,n})+\be_{t+1,n} -\bM\ba_{t,n} \big\|
    \nonumber \\ 
    & \hspace{0.1cm} \leq{} \big\| \bM_{t+1,n}(\ba_{t,n}+\bdelta_{t,n}+\bs_{t,n})+\be_{t+1,n} - \bM \ba_{t,n} \big\|
\end{align}
for any solution $\bM\in\mathcal{M}$.

We can upper bound the second term in the l.h.s. of the above expression, that is, the reconstruction error without the effect of $\bs_{t,n}$, as follows:
\begin{align} \label{eq:proof_s_i_2}
    & \big\| \bM_{t+1,n}(\ba_{t,n}+\bdelta_{t,n})+\be_{t+1,n} -\bM\ba_{t,n} \big\|
    \\
    & \hspace{0.6cm} = \big\| (\bM_{t+1,n}-\bM)\ba_{t,n}+\bM_{t+1,n}\bdelta_{t,n}+\be_{t+1,n} \big\|
    \nonumber \\
    & \hspace{0.6cm} \leq \big\| (\bM_{t+1,n}-\bM)\ba_{t+1} \big\| + \big\| \bM_{t+1,n}\bdelta_{t,n} \big\| + \big\| \be_{t+1,n} \big\|.
    \nonumber %
\end{align}
\cred{Since $\ba_{t+1}$ is confined to the unit simplex, its L$_2$ norm satisfies
$\|\ba_{t+1}\|\leq1$, which allows us to bound the first term above using the properties of matrix norms as:
\begin{align}
    & \big\| (\bM_{t+1,n}-\bM)\ba_{t+1} \big\| \leq \big\|\bM_{t+1,n}-\bM\big\|\|\ba_{t+1}\| 
    \nonumber\\
    & \hspace{0.7cm} \leq \big\|\bM_{t+1,n}-\bM\big\|_F
    \nonumber\\
    & \hspace{0.7cm} \leq \max_p \,\,\sqrt{P} \big\|\bm_{t+1,n}^p - \bm^p\big\|
    \nonumber\\
    & \hspace{0.7cm} \leq \max_{p} \max_{\bm,\bm'\in\calM_p} \sqrt{P} \big\|\bm - \bm'\big\| < \sqrt{P}\,\Omega_M'
\end{align}
where $\bm_{t+1,n}^p$ and $\bm^p$ are the $p$-th columns of $\bM_{t+1,n}$ and $\bM$, respectively. Thus,~\eqref{eq:proof_s_i_2} can be bounded as:}
\begin{align} \label{eq:proof_s_ii}
    & \cred{\big\| \bM_{t+1,n}(\ba_{t,n}+\bdelta_{t,n})+\be_{t+1,n} -\bM\ba_{t,n} \big\|}
    \nonumber \\
    & \hspace{0.7cm}
    < \cred{\sqrt{P}\,\Omega_M'}+\Omega_{\delta}+\Omega_e
    \nonumber \\
    & \hspace{0.7cm} < \Omega_s \,(\cred{F}+1)^{-1}
    \nonumber \\
    & \hspace{0.7cm} < \big\| \bM_{t+1,n}\bs_{t,n} \big\| \,(\cred{F}+1)^{-1}
\end{align}
Now, by multiplying both sides of~\eqref{eq:proof_s_ii} by $\cred{F}+1$, subtracting $\|(\bM_{t+1,n}-\bM)\ba_{t,n}+\bM_{t+1,n}\bdelta_{t,n}+\be_{t+1,n}\|$ from each side, and finally using the result in~\eqref{eq:proof_s_i}, we obtain:
\begin{align}
    & \cred{F} \big\| (\bM_{t+1,n}-\bM)\ba_{t,n}+\bM_{t+1,n}\bdelta_{t,n}+\be_{t+1,n} \big\|
    \nonumber \\
    & \hspace{0.5cm} < \big\| \bM_{t+1,n}\bs_{t,n}\| 
    \nonumber \\
    & \hspace{1.2cm} - \|(\bM_{t+1,n}-\bM)\ba_{t,n}+\bM_{t+1,n}\bdelta_{t,n}+\be_{t+1,n} \big\|
    \nonumber \\
    & \hspace{0.5cm} \leq RE_{t+1,n}^*
    \nonumber \\
    & \hspace{0.5cm} \leq \big\| \bM_{t+1,n}(\ba_{t,n}+\bdelta_{t,n}+\bs_{t,n})+\be_{t+1,n} - \bM \ba_{t,n} \big\|
    \nonumber
\end{align}
for any solution $\bM\in\mathcal{M}$, where $RE_{t+1,n}^*$ is the optimal reconstruction error of problem~\eqref{eq:opt_appr_perm_i}. This result means that, if $\bs_{t,n}\neq\cb{0}$, the reconstruction error $RE_{t+1,n}^*$ will be at least $\cred{F}$ times larger than it would be if $\bs_{t,n}=\cb{0}$. 
\end{proof}

The rationale behind Theorem~\ref{thm:theorem2} is that, when there are large abundance changes between two consecutive time instants, that is, $\bs_{t,n}$ is large, then as long as the maximum pairwise difference between the elements in $\calM$ is not too large, it will not be possible to reconstruct the pixel $\by_{t+1,n}$ with good accuracy with any EM matrix in~$\calM$ because $\widehat{\ba}_{t,n}$ is too distant from $\ba_{t+1,n}$. This will lead to a large reconstruction error, what can be explored in order to identify such changes.

\cred{An important aspect of FM-MESMA is the assumption that an accurate estimate of the abundances is available at time $t$ (i.e., $\widehat{\ba}_{t,n}\approx\ba_{t,n}$). Due to the sequential nature of FM-MESMA, this approximation affects the results at time $t+1$. Since problem~\eqref{eq:opt_appr_perm_i} actually relies on the approximation $\widehat{\ba}_{t,n}\approx\ba_{t+1,n}$ to perform EM selection, errors in $\widehat{\ba}_{t,n}$ will affect it in the same way as $\bdelta_{t,n}$ (or, in the case of large errors, as $\bs_{t,n}$). Thus, while errors in $\widehat{\ba}_{t,n}$ can impact the performance of the method negatively, their effect can be controlled: if the errors are large and $K$ is properly selected, they will have the same effect as sudden changes, and cause the algorithm to reprocess the given pixel with MESMA or AAM.}

\subsection{Computational complexity analysis}

In this section we show how the computational complexity of the proposed algorithm compares to that of MESMA \cred{and AAM}. \cred{Before proceeding, let us denote by $\kappa\in[0,1]$ the average proportion of pixels that undergo changes between consecutive time instants $t$ and $t+1$.}

The operations in \cred{FM-MESMA} at each time instant $t$ with non-negligible computational complexity consist of 1) solving of optimization problem~\eqref{eq:opt_appr_perm_i} for all $N$ image pixels, and 2) running \cred{MESMA or AAM} to estimate the abundances of pixels that did undergo significant changes. 
For the first step, a simple enumeration strategy can be employed to solve problem~\eqref{eq:opt_appr_perm_i} by testing all possible EM models, which results in a complexity of about $\mathcal{O}(\prod_{\cred{p}=1}^P C_{\cred{p}} LP)$. %
For the second step, the \cred{complexity will} be on average $\kappa$ times \cred{the one of MESMA or of AAM, which will be discussed below}.

Determining the computational complexity of MESMA is less direct, since the optimization problem~\eqref{eq:mesma} does not have a closed form solution and depends on iterative algorithms. For simplicity, we assume that the inner FCLS problem in MESMA~\eqref{eq:mesma} is solved by approximately translating it into a nonnegative least squares (NNLS) problem of dimensions $(L+1)\times (P+1)$~\cite{heinz2001FCLS_unmixing}.
Different methods have been proposed to solve NNLS problems, including the active set methods, the interior point method, and other iterative approaches~\cite{chen2010methodsNNLS}. Although iterative approaches perform better in large scale problems, the interior point method works well for problems like MESMA and can give us an idea of its complexity~\cite{polyak2015projectedGradientNNLS}.

The interior point method needs $\mathcal{O}(\ln(\epsilon^{-1}))$ iterations, each with complexity $(P+1)^3$, in order to achieve a reconstruction error that is $\epsilon$-close to the global optimum~\cite[p. 393]{nesterov2018lecturesConvexOptimization}, i.e.,
\begin{align}
    \|\by_n - \bM\widehat{\ba}_n\|^2 - \|\by_n - \bM\ba_n^*\|^2 \leq \epsilon
\end{align}
where $\ba_n^*$ is abundance vector that minimizes the NNLS problem for a given~$\bM$.
Thus, since this problem is solved for each EM model, the computational complexity of MESMA is approximately $\mathcal{O}(\prod_{\cred{p}=1}^P C_{\cred{p}} P^3 \ln(\epsilon^{-1}))$.%

It can be seen that although the complexity of the proposed method and MESMA scale similarly as the size of the libraries~$C_{\cred{p}}$ increases, as long as $\kappa$ is small the proposed algorithm can handle scenarios with moderately large~$P$ much more easily since problem~\eqref{eq:opt_appr_perm_i} scales linearly with~$P$. 
This characteristic needs to be emphasized because, while several methods can effectively address the problem of reducing the size $C_{\cred{p}}$ of the libraries by removing redundant signatures~\cite{roth2012libraryPruningComparisoncost,dennison2003libraryPruningMESMAcost,dennison2004comparisonErrorMetricsEndmemberSelection}, a larger number $P$ of EM classes cannot be so easily circumvented. \cred{The AAM algorithm~\cite{heylen2016alternatingAngleMinimization}, for instance, requiring $\mathcal{O}(P2^P(L^3+PL\max_{p}C_{p}))$ operations at every iteration, has a larger base cost depending on terms such as $L^3$ and scales quickly with $P$, making it costly in scenarios where $P$ is large and $C_p$ is small.} 
\cred{Such cases in which the proposed algorithm is particularly faster (i.e., small $C_p$) are of special practical interest. This is because recently proposed state of the art approaches to spectral library reduction for MESMA have reported experimental results indicating that libraries could be reduced to between two and five (averaging three) signatures per EM without an appreciable drop in performance~\cite{meerdink2019bundleExtractionPartiallyLabelled}.}
\cred{Finally, we note that the best choice, in terms of complexity, among using MESMA or AAM to unmix the changed pixels in the proposed algorithm may depend on each scenario. If $P$ or $C_p$ is small, MESMA can be a good choice due to its smaller base cost compared to AAM, whereas for moderate $C_p$ AAM will perform faster.}

\section{Experimental Results} \label{sec:experimentalResults}

We shall now evaluate the performance of the proposed \cred{FM-MESMA} algorithm using simulations with synthetic, semi-real, and real data. Our method is compared with MESMA and AAM~\cite{heylen2016alternatingAngleMinimization}, which are both library-based methods, with fully constrained least squares (FCLS) algorithm, and with the online unmixing algorithm (OU)~\cite{Thouvenin_IEEE_TIP_2016}.
\cred{The OU algorithm estimates both the abundances and one set of EMs for each time instant blindly from the HI using a two-stage stochastic optimization procedure. The EMs are modelled in OU as temporally smooth additive perturbation over a mean EM matrix.} \cred{FM-MESMA is implemented using MESMA to unmix the significantly changed pixels in step~10.}
In all simulations, the endmembers for the FCLS were extracted from the HI using the VCA algorithm~\cite{Nascimento2005}.
\cred{For FM-MESMA, we computed $RE_0$ in \eqref{eq:threshold_param} using the pixels in the HI at the initial time instant $t=1$, i.e., $\calU=\{\by_{1,1},\ldots,\by_{1,N}\}$.}
To evaluate the performance of the algorithms, different metrics were considered depending on the simulation setups. Performance metrics that are specific to the simulations with synthetic data will be defined in Section~\ref{sec:results_syntheticData}. For the simulations with semi-real and real data, we considered as metrics the root mean squared error (RMSE) and the spectral angle mapper (SAM). The RMSE between two sequences of matrices~$\bX_t$ and~$\bX_t^*$, for $t=1,\ldots,T$, is defined as:
\begin{align} %
    \text{RMSE}_{\bX} = \sum_{t=1}^T \sqrt{\frac{1}{T\,N_{\!\bX}}\|\bX_{\!t} - \bX_{\!t}^*\|^2_F}
    \nonumber
\end{align}
where $N_{\!\bX}$ is the number of elements in~$\bX_t$. The SAM between the true and estimated endmembers is defined as:
\begin{align}
    \text{SAM}_{\bM} = \frac{1}{TNP}\sum_{t=1}^{T} \sum_{n=1}^{N} \sum_{\cred{p}=1}^{P} \arccos\bigg(\frac{(\bm_{t,n}^{\cred{p}})^\top\widehat{\bm}_{t,n}^{\cred{p}}}{\|\bm_{t,n}^{\cred{p}}\|\|\widehat{\bm}_{t,n}^{\cred{p}}\|}\bigg),
    \nonumber
\end{align}
where $\bm_{t,n}^{\cred{p}}$ and $\widehat{\bm}_{t,n}^{\cred{p}}$ are the $\cred{p}$-th columns of the true and the estimated endmembers, respectively.

\subsection{Synthetic Data} \label{sec:results_syntheticData}

The simulations with synthetic data were designed to illustrate how \cred{FM-MESMA} performs when compared to MESMA and AAM \cred{in four different ways, namely:
\begin{enumerate}
    \item Its computational cost for different values of $P$ and $C_p$ (Section~\ref{sec:results_syntheticData_compCost});
    \item Its accuracy when detecting abrupt abundance changes, and the effect of the proportion of changed pixels (i.e., $\kappa$) on the computational cost of FM-MESMA (Section~\ref{sec:results_syntheticData_CDcomparison});
    \item Its accuracy when recovering the EMs from $\calM$ for different amounts of abundance temporal variation $\bdelta_{t,n}$ and different signal to noise ratios (SNRs) (Section~\ref{sec:results_syntheticData_libVar});
    \item Its accuracy when recovering the EMs from $\calM$ for different amounts of library variance $\sigma_{\calM}^2$ and different SNRs (Section~\ref{sec:results_syntheticData_abVar}).
\end{enumerate}
These different experiments provide an empirical assessment of some of the theoretical results derived in Section~\ref{sec:theoremsAndAnalysis}.}
For the general simulation setup, we considered sequences of $T=11$ images with $L=200$ bands and $N=1000$ pixels. Each pixel $\by_{t,n}$ was generated according to the LMM in~\eqref{eq:multit_model_ia}, where the true endmember matrix $\bM_{t,n}$ was sampled uniformly from a library $\calM$, and $\be_{t,n}$ was a white Gaussian noise. 
The EM library $\calM$ was generated randomly. The mean $\bmu_p$, $p=1,\ldots,P$ of each material was first sampled from a uniform distribution over interval $[0,1]^L$. Then each EM signature in $\calM_p$, $p=1,\ldots,P$ was generated as a sample from an isotropic Gaussian distribution $\calN(\bmu_p,\sigma_{\calM}^2\bI)$ and truncated in the interval $[0,1]^L$.
The abundances $\ba_{1,n}$ were sampled from a Dirichlet distribution. \cred{Between} each pair of images at instants $t$ and $t+1$, a proportion $\kappa$ of the $N$ pixels was changed with new samples, whereas the remaining ones were kept constant unless otherwise specified.
The other parameters such as $K$, the \cred{SNR}, the change ratio $\kappa$ and the library variance $\sigma_{\calM}^2$ will be specified in the following for each experiment. 
\cred{In general, the library variance is defined as:
\begin{align}
    \sigma_{\calM}^2 = \frac{1}{LP} \sum_{p=1}^P \tr\big\{\cov(\bm_p,\bm_p)\big\} \,,
\end{align}
where $\tr\{\cdot\}$ is the matrix trace operator and $\cov(\bm_p,\bm_p)$ denotes the covariance matrix of the signatures from the $p$-th EM, which can be estimated using the samples in the library~$\calM_p$ if unknown.}

\begin{table*}[htb]
    \centering
    \footnotesize
    \renewcommand{\arraystretch}{1.05}
    \caption{Execution \cred{times} for different values of $P$ and $C_{\cred{p}}$ (best results marked in bold, best results by our method marked in red).}
    \label{tab:comparative_synthetic_execution_times}
    \vspace{-0.2cm}
    \begin{tabular}{c|c|cccccccccccccccccccccccccccccccccccc} \hline
	&	\diagbox[height=14pt]{Method}{$C_{\cred{p}}$}    & 	2 &	3	&	4	&	5	&	6	&	7	&	8	&	9	&	10	\\ \hline\hline
																					
$P=2$	&	MESMA	&	1.20	&	1.21	&	1.27	&	1.43	&	1.37	&	1.57	&	1.60	&	1.66	&	1.96	\\
	&	AAM	&	6.61	&	6.48	&	6.67	&	7.11	&	6.79	&	6.88	&	6.98	&	7.06	&	7.17	\\
	&	\cred{FM-MESMA}	&	\cmark{\bf 1.19}	&	\cmark{\bf 1.16}	&	\cmark{\bf 1.18}	&	\cmark{\bf 1.30}	&	\cmark{\bf 1.21}	&	\cmark{\bf 1.31}	&	\cmark{\bf 1.31}	&	\cmark{\bf 1.35}	&	\cmark{\bf 1.55}	\\\hline
																					
$P=3$	&	MESMA	&	1.55	&	1.78	&	2.32	&	2.88	&	4.52	&	6.26	&	6.79	&	9.46	&	15.90	\\
	&	AAM	&	30.73	&	30.41	&	31.62	&	32.12	&	32.04	&	32.62	&	33.15	&	33.82	&	33.48	\\
	&	\cred{FM-MESMA}	&	\cmark{\bf 1.44}	&	\cmark{\bf 1.52}	&	\cmark{\bf 1.60}	&	\cmark{\bf 1.75}	&	\cmark{\bf 2.08}	&	\cmark{\bf 2.45}	&	\cmark{\bf 3.05}	&	\cmark{\bf 4.03}	&	\cmark{\bf 4.93}	\\\hline
																					
$P=4$	&	MESMA	&	2.91	&	4.97	&	11.04	&	36.74	&	52.37	&	103.38	&	180.75	&	282.21	&	490.92	\\
	&	AAM	&	98.53	&	98.12	&	101.76	&	103.84	&	103.99	&	106.57	&	107.61	&	108.82	&	109.05	\\
	&	\cred{FM-MESMA}	&	\cmark{\bf 2.14}	&	\cmark{\bf 2.30}	&	\cmark{\bf 3.07}	&	\cmark{\bf 5.17}	&	\cmark{\bf 6.88}	&	\cmark{\bf 11.58}	&	\cmark{\bf 16.79}	&	\cmark{\bf 28.87}	&	\cmark{\bf 45.57}	\\\hline
																					
$P=5$	&	MESMA	&	5.04	&	24.34	&	74.94	&	272.21	&	680.85	&	1482.15	&	2780.66	&	4882.36	&	7783.21	\\
	&	AAM	&	272.22	&	270.72	&	282.31	&	285.31	&	289.26	&	292.36	&	296.11	&	{\bf 295.97}	&	{\bf 300.89}	\\
	&	\cred{FM-MESMA}	&	\cmark{\bf 2.68}	&	\cmark{\bf 4.21}	&	\cmark{\bf 8.80}	&	\cmark{\bf 25.38}	&	\cmark{\bf 50.61}	&	\cmark{\bf 110.60}	&	\cmark{\bf 234.83}	&	383.27	&	645.80	\\\hline
																					
$P=6$	&	MESMA	&	8.61	&	81.69	&	398.69	&	1600.86	&	4625.57	&	12153.74	&	24423.06	&	$\infty$	&	$\infty$	\\
	&	AAM	&	678.18	&	691.35	&	715.88	&	728.04	&	745.54	&	{\bf 751.10}	&	{\bf 761.53}	&	{\bf 760.70}	&	{\bf 760.47}	\\
	&	\cred{FM-MESMA}	&	\cmark{\bf 3.66}	&	\cmark{\bf 9.30}	&	\cmark{\bf 36.11}	&	\cmark{\bf 135.68}	&	\cmark{\bf 405.91}	&	1069.17	&	2614.65	&	4707.32	&	8513.08	\\\hline
																					
$P=7$	&	MESMA	&	19.58	&	257.71	&	1933.77	&	8554.18	&	31760.77	&	$\infty$	&	$\infty$	&	$\infty$	&	$\infty$	\\
	&	AAM	&	1653.51	&	1671.48	&	1726.08	&	1753.13	&	{\bf 1788.30}	&	{\bf 1808.63}	&	{\bf 1829.27}	&	{\bf 1843.13}	&	{\bf 1861.54}	\\
	&	\cred{FM-MESMA}	&	\cmark{\bf 5.07}	&	\cmark{\bf 23.60}	&	\cmark{\bf 158.68}	&	\cmark{\bf 736.29}	&	2660.14	&	7728.61	&	21816.81	&	$\infty$	&	$\infty$	\\\hline
																					
$P=8$	&	MESMA	&	36.29	&	748.21	&	8394.23	&	49221.23	&	$\infty$	&	$\infty$	&	$\infty$	&	$\infty$	&	$\infty$	\\
	&	AAM	&	3876.57	&	3932.96	&	4079.04	&	4112.40	&	{\bf 4192.04}	&	{\bf 4368.23}	&	{\bf 4406.82}	&	{\bf 4453.85}	&	{\bf 4506.10}	\\
	&	\cred{FM-MESMA}	&	\cmark{\bf 6.50}	&	\cmark{\bf 68.31}	&	\cmark{\bf 656.97}	&	\cmark{\bf 3869.42}	&	17330.46	&	$\infty$	&	$\infty$	&	$\infty$	&	$\infty$	\\\hline
																					
$P=9$	&	MESMA	&	75.84	&	3167.33	&	38765.51	&	$\infty$	&	$\infty$	&	$\infty$	&	$\infty$	&	$\infty$	&	$\infty$	\\
	&	AAM	&	8972.31	&	9126.39	&	9354.76	&	{\bf 9505.15}	&	{\bf 9694.71}	&	{\bf 9861.76}	&	{\bf 9960.79}	&	{\bf 10041.00}	&	{\bf 10178.05}	\\
	&	\cred{FM-MESMA}	&	\cmark{\bf 9.78}	&	\cmark{\bf 211.21}	&	\cmark{\bf 2913.83}	&	19922.95	&	$\infty$	&	$\infty$	&	$\infty$	&	$\infty$	&	$\infty$	\\ \hline
    \end{tabular}
\end{table*}

\subsubsection{\textbf{Computational complexity analysis}}
\label{sec:results_syntheticData_compCost}

We evaluated the execution time of MESMA, AAM and \cred{FM-MESMA} for different values of $P\in\{2,\ldots,9\}$ and $C_{\cred{p}}\in\{2,\ldots,10\}$, $\cred{p}=1,\ldots,P$. We considered \cred{an} SNR of 40dB, a change ratio of $\kappa=0.01$, a library variance of $\sigma_{\calM}^2=0.12$ and $K=10$ in Algorithm~\ref{alg:proposed_alg}. The results are shown in Table~\ref{tab:comparative_synthetic_execution_times}. It can be seen that \cred{FM-MESMA} has a significantly smaller execution time when compared to MESMA. Moreover, there are also significant improvements over AAM when $P$ is large and $C_{\cred{p}}$ is small or moderate. Note that although AAM performs better when $C_{\cred{p}}$ is large, this situation is avoided in practice without significant impact in the performance by removing some redundant signatures from the library~\cite{roth2012libraryPruningComparisoncost,dennison2003libraryPruningMESMAcost,dennison2004comparisonErrorMetricsEndmemberSelection,roberts2003countBasedEndmemberSelection}. However, this is not the case for larger $P$, \cred{in which case FM-MESMA} leads to a performance improvement. \cred{We} solved~\eqref{eq:opt_appr_perm_i} using an exhaustive search procedure. More efficient solutions will be devised in the future.

\begin{figure}
    \centering
    \includegraphics[width=0.72\linewidth]{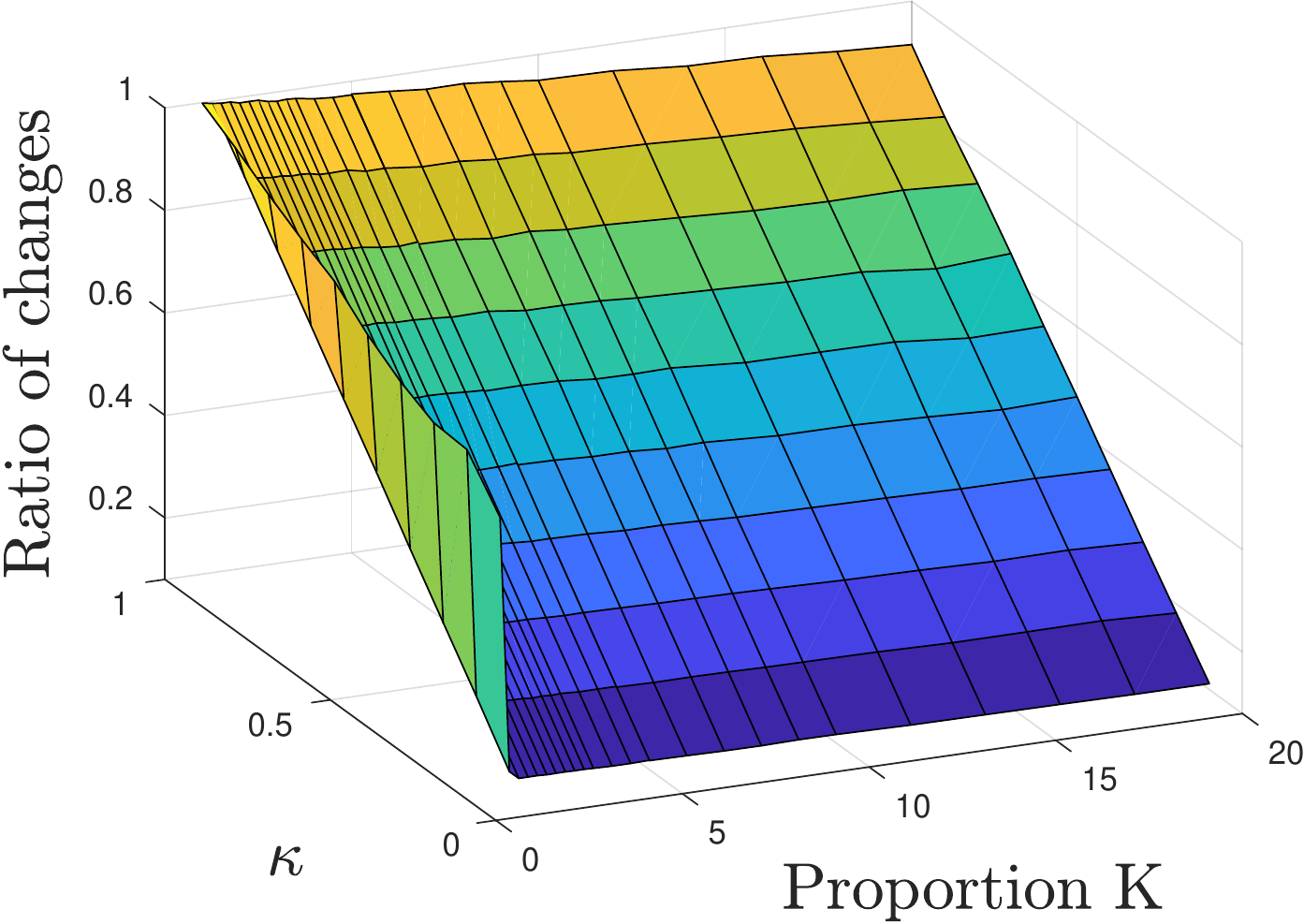}
    
    \medskip
    
    \includegraphics[width=0.7\linewidth]{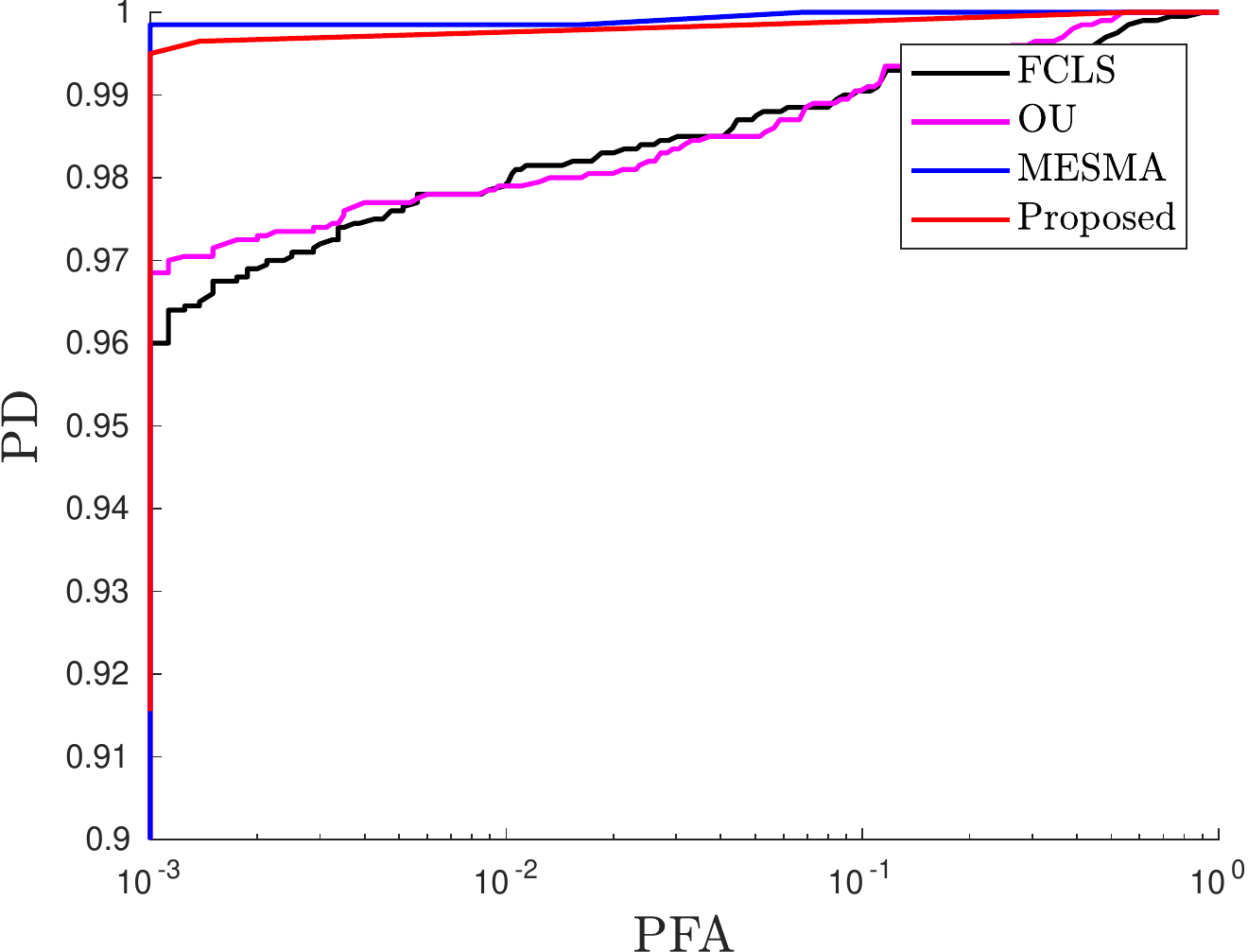}
    \caption{\cred{Large changes analysis: Number of detected changes as a function of $K$ and $\kappa$ (top), and change detection performance for $\kappa=0.2$ (bottom) (AAM results were omitted due to being similar to MESMA's).}}
    \label{fig:synthResults_largeAbundanceChanges_rev}
\end{figure}

\subsubsection{\textbf{Large change detection analysis}}
\label{sec:results_syntheticData_CDcomparison}

We evaluated the accuracy of \cred{FM-MESMA when detecting} pixels containing large abundance changes to reprocess them using the MESMA \cred{or AAM algorithms}. This experiment also allowed us to check numerically the theoretical results in Theorem~\ref{thm:theorem2}. \cred{The experiment is divided in two parts. First, we evaluate the effect of $K$ and $\kappa$ on the computational burden of the proposed method by measuring the proportion of pixels marked as changes depending on these variables. 
Afterwards, we compare the change detection accuracy of FM-MESMA to that of the other algorithms (where change detection was performed using the strategy in~\cite{erturk2015simplesSU_CD}) by fixing $\kappa=0.2$ and evaluating the probability of detection (PD) against the probability of false alarm (PFA), where PD and PFA are defined as:
\begin{align}
    \text{PD}_A &= %
    \sum_{t=2}^T \frac{ \sum_{n=1}^N \cred{\chi_{\mathbb{0}}}(\ba_{t,n}-\ba_{t-1,n}) \cdot \widehat{\biota}_{\bs,t,n}}{(T-1)\sum_{n=1}^N \cred{\chi_{\mathbb{0}}}(\ba_{t,n}-\ba_{t-1,n})}
    \nonumber
    \\
    \text{PFA}_A &= %
    \sum_{t=2}^T \frac{\sum_{n=1}^N \max\big\{\widehat{\biota}_{\bs,t,n} - \cred{\chi_{\mathbb{0}}} {(\ba_{t,n}-\ba_{t-1,n})},0\big\}}{(T-1)\big(N-\sum_{n=1}^N \cred{\chi_{\mathbb{0}}}(\ba_{t,n}-\ba_{t-1,n})\big)}
    \nonumber
\end{align}
where \cred{$\chi_{\mathbb{0}}(\cdot)$ is the indicator function of the set $\mathbb{0}=\{\cb{0}\}$ (i.e., $\chi_{\mathbb{0}}(\bx)=1$ if $\bx=\cb{0}$ and~$0$ otherwise)}. For both cases, we considered an SNR of 30dB, a library variance of $\sigma_{\calM}^2=0.12$, $P=4$ EMs and $C_p=3$. 
The results are shown in Figure~\ref{fig:synthResults_largeAbundanceChanges_rev}. It can be seen that when $K$ was not too close to one, the number of times a pixel had to be reprocessed depended mostly on the actual amount of changed pixels in the scene $\kappa$. 
In terms of accuracy, the change detection performance was satisfactory for all methods, with MESMA showing the best performance followed closely by FM-MESMA, and OU and FCLS being slightly worse. 
This suggests that one can select a moderate value of $K$ and obtain a good change detection accuracy without increasing the complexity unnecessarily.
However, devising a strategy to select an optimal value for $K$ is more complex and will be left as a subject for future work.
}

\begin{figure*}
    \centering
    \includegraphics[width=0.325\linewidth]{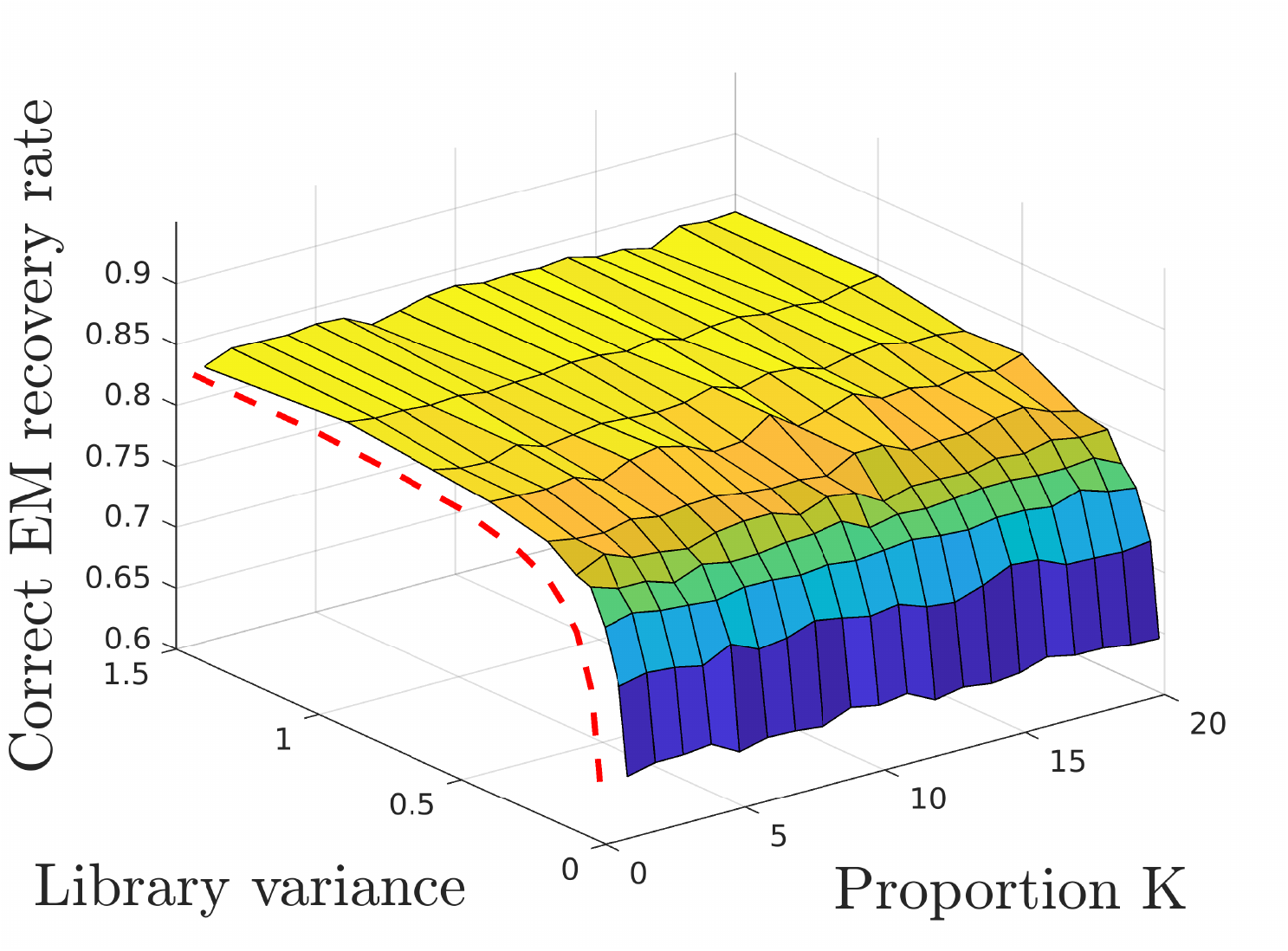}
    \includegraphics[width=0.325\linewidth]{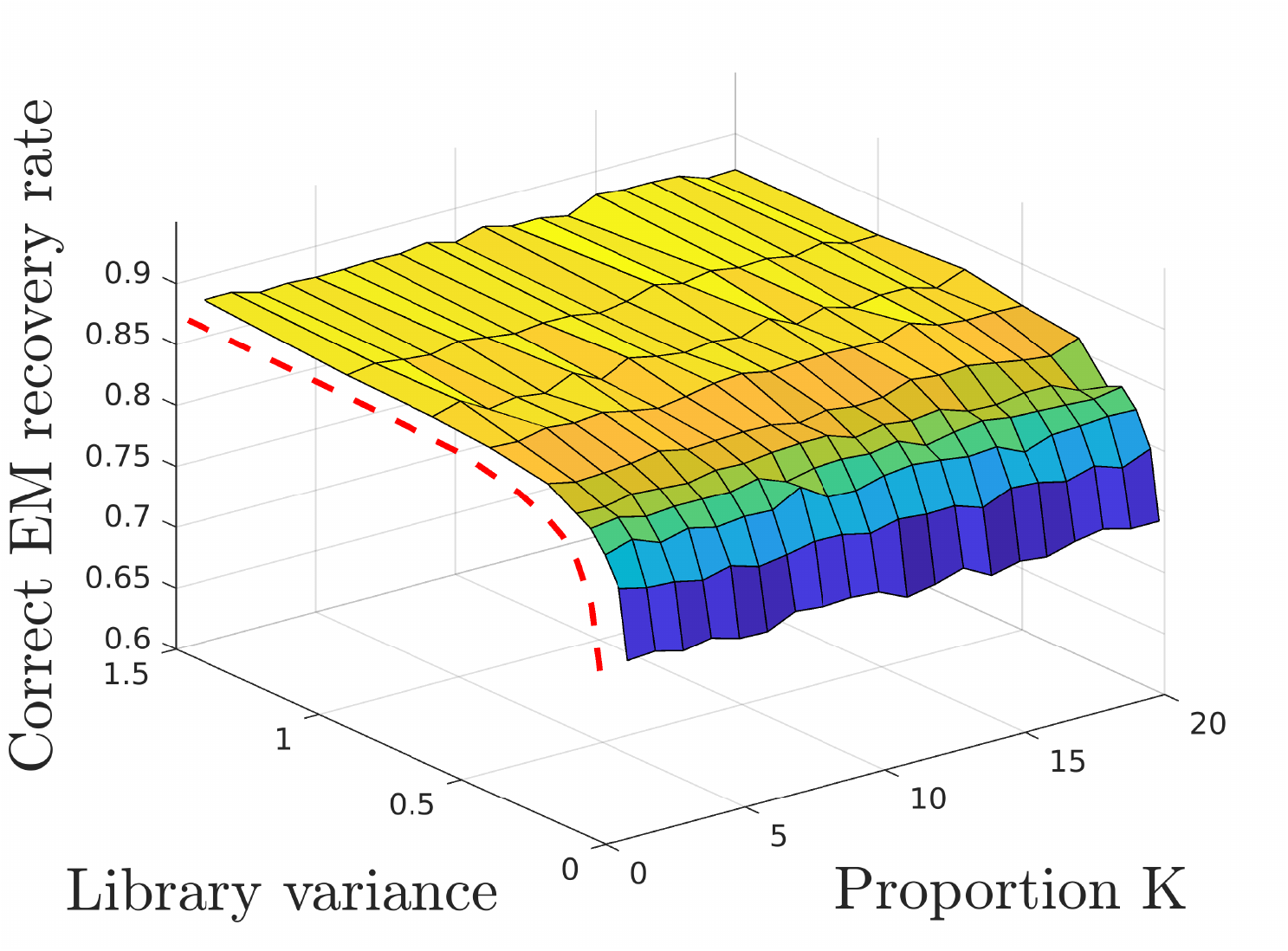}
    \includegraphics[width=0.325\linewidth]{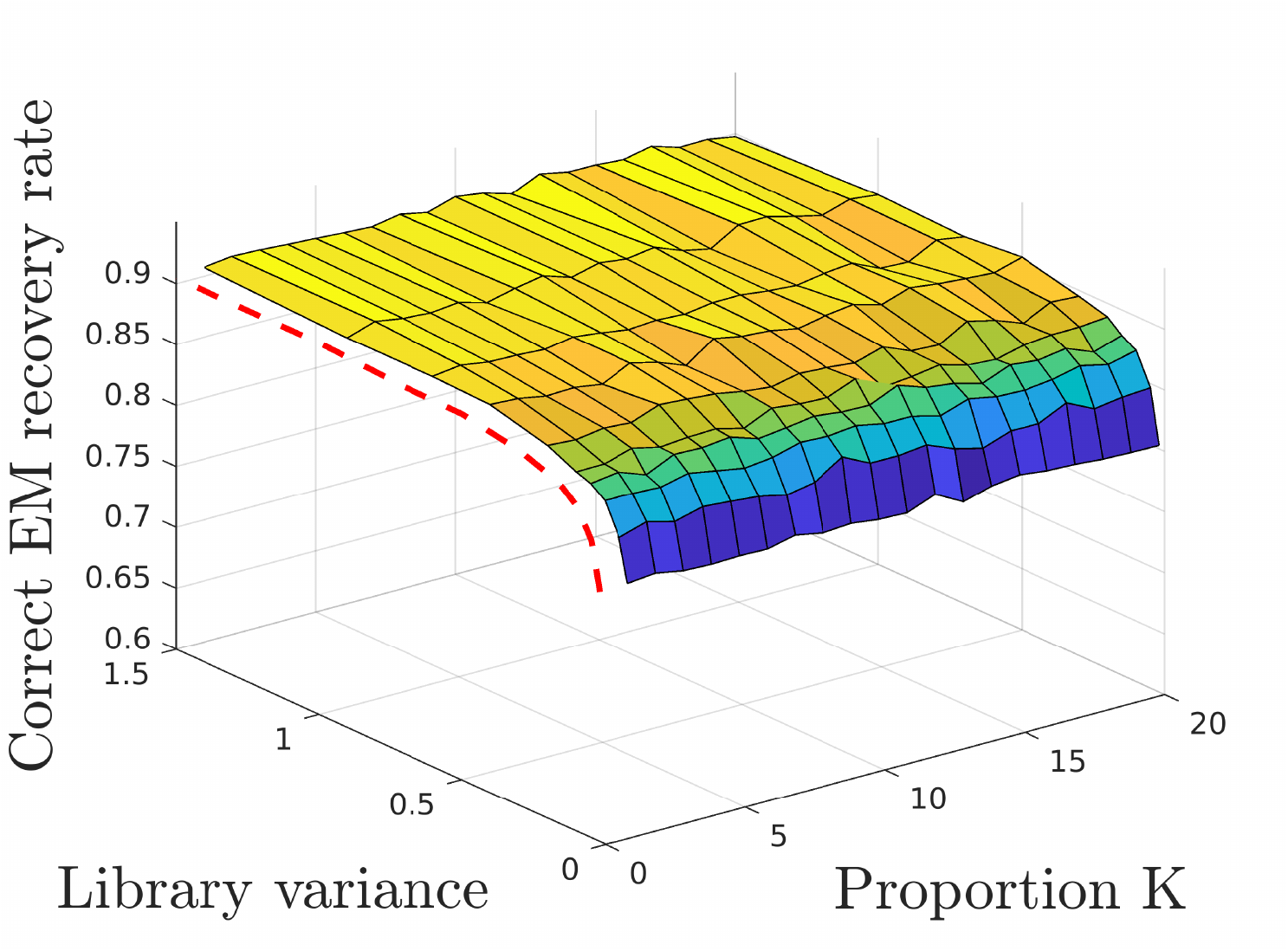}
    \caption{Library variance analysis: Correct EM recovery rate as a function of \cred{$\sigma_{\calM}^2$ and $K$}, for SNRs of 25dB (left), 35dB (middle) and 45dB (right).}
    \label{fig:synthResults_libraryVarianceAnalysis}
\end{figure*}
\subsubsection{\textbf{Library variance analysis}}  \label{sec:results_syntheticData_libVar}

We evaluated how often the solution of problem~\eqref{eq:opt_appr_perm_i} was the same as the true EM matrix $\bM_{t,n}$ when compared to that of MESMA, for different library variances.
This experiment also allowed us to validate numerically one part of Theorem~\ref{thm:theorem1}. We considered $\kappa=0.05$, $P=4$, $C_{\cred{p}}=3$, and different values of $K\in\{1,\ldots,20\}$ and $\sigma_{\calM}^2=\{0.02,0.05,0.1,0.15,0.2,0.3,0.5,0.7,1,1.5\}$. The performance was evaluated using the endmember positive predictive value (PPV)
\begin{align} \label{eq:synthMetrics_PPV_M}
    \text{PPV}_M &= \frac{1}{T} \sum_{t=1}^T \frac{ \sum_{n=1}^N \cred{\chi_{\mathbb{0}}} (\bM_{t,n}-\widehat{\bM}_{t,n})}{N}.
\end{align}
where $\bM_{t,n}\in\calM$ and $\widehat{\bM}_{t,n}\in\calM$ are the true and estimated EM matrices, respectively.

The results are shown in Figure~\ref{fig:synthResults_libraryVarianceAnalysis} \cred{for SNRs of 25, 35 and 45db}, where the red dashed line depicts the MESMA results. The results show that the values of $\text{PPV}_M$ for \cred{FM-MESMA} were very similar to those obtained by MESMA for the different values of $\sigma_{\calM}^2$ and $K$. Moreover, the PPVs was consistently better for larger $\sigma_{\calM}^2$, what agrees with the conclusions of Theorem~\ref{thm:theorem1}. While a smaller SNR negatively affected the overall accuracy of both algorithms, their relative behavior was not affected. Different values of $K$ also did not have a significant effect on the results obtained by \cred{FM-MESMA} for the selected proportion of large changes.
This suggests that \cred{FM-MESMA} is a computationally efficient and accurate alternative to MESMA in those circumstances.

\begin{figure*}
    \centering
    \includegraphics[width=0.325\linewidth]{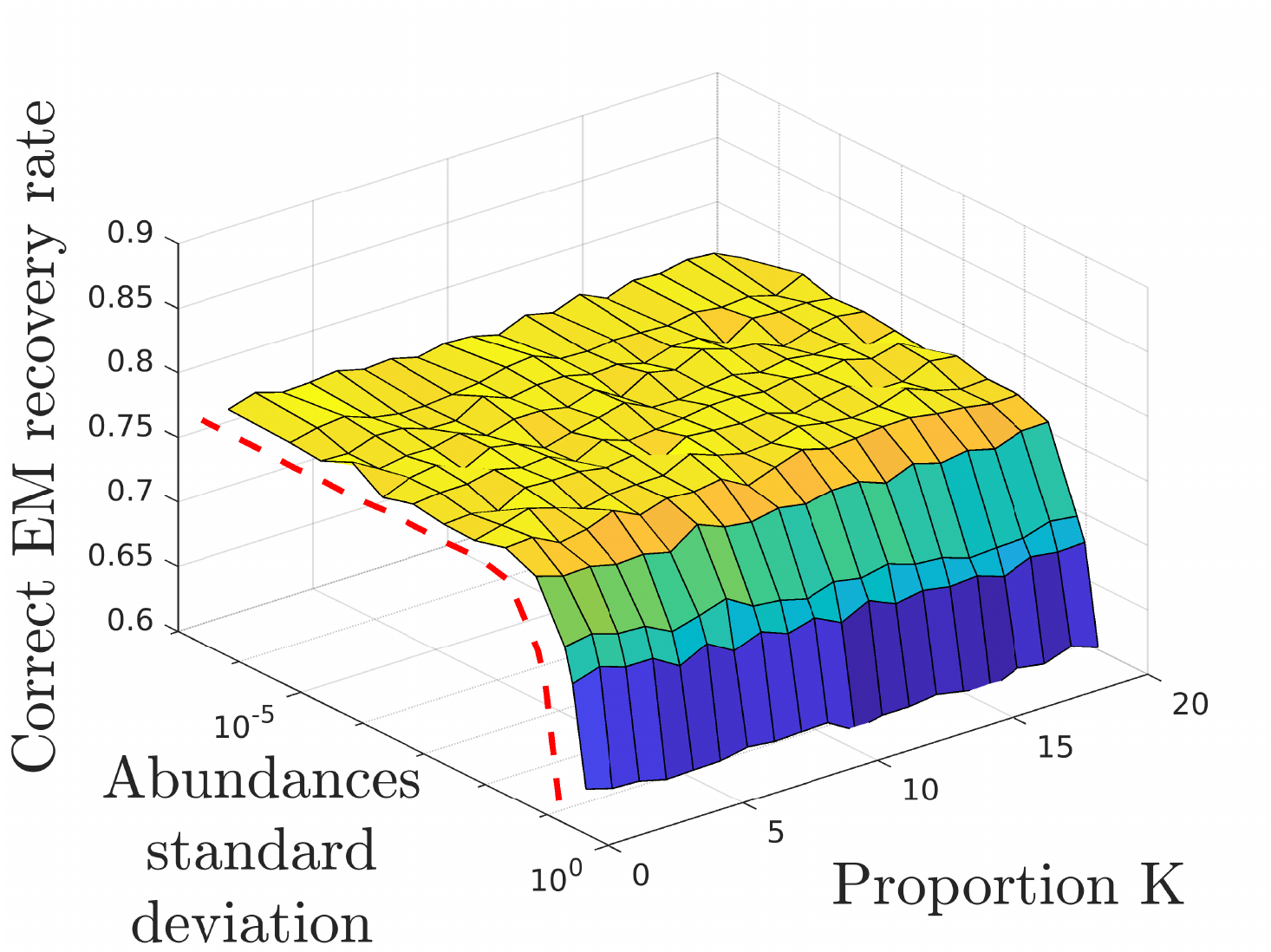}
    \includegraphics[width=0.325\linewidth]{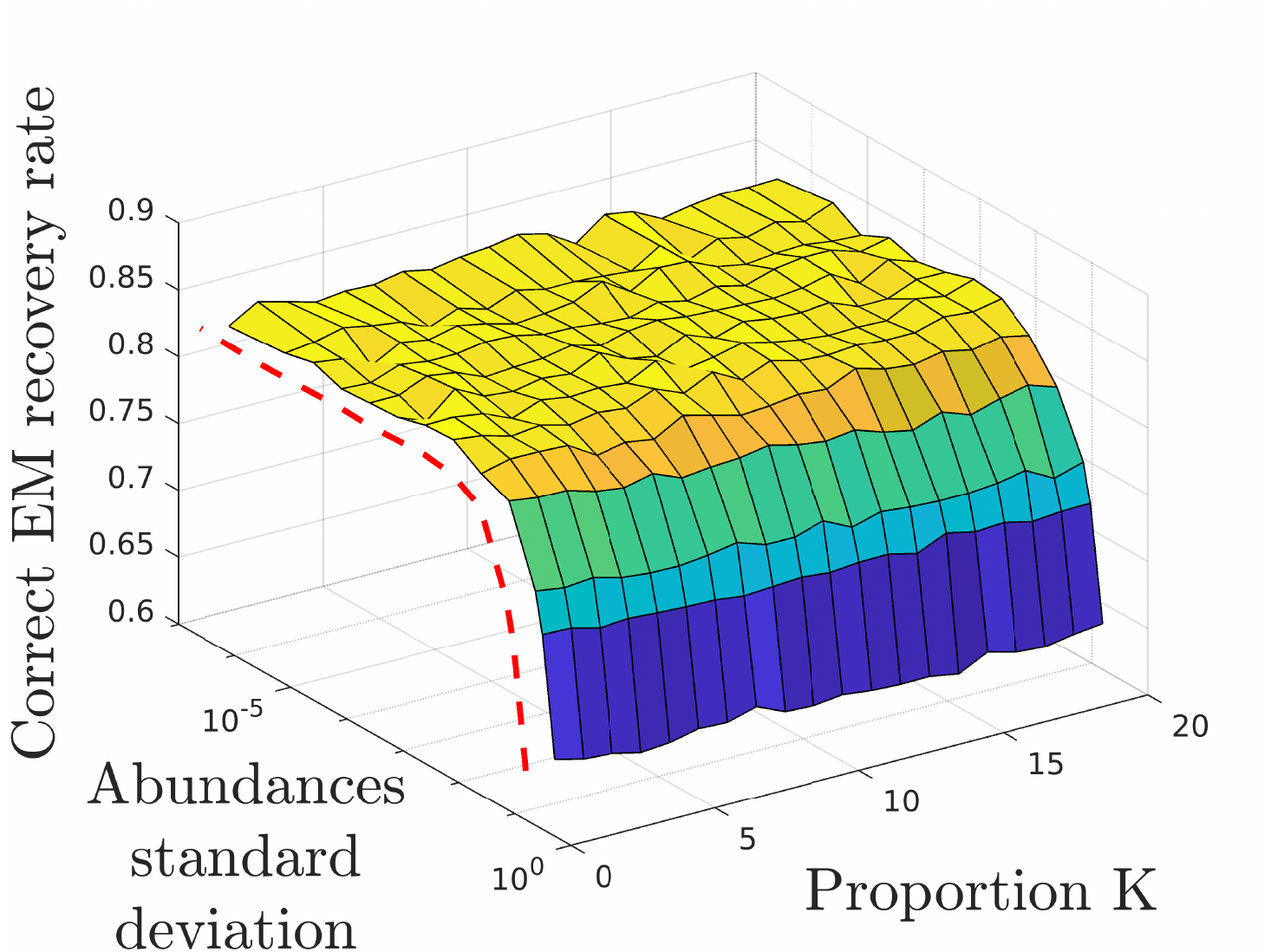}
    \includegraphics[width=0.325\linewidth]{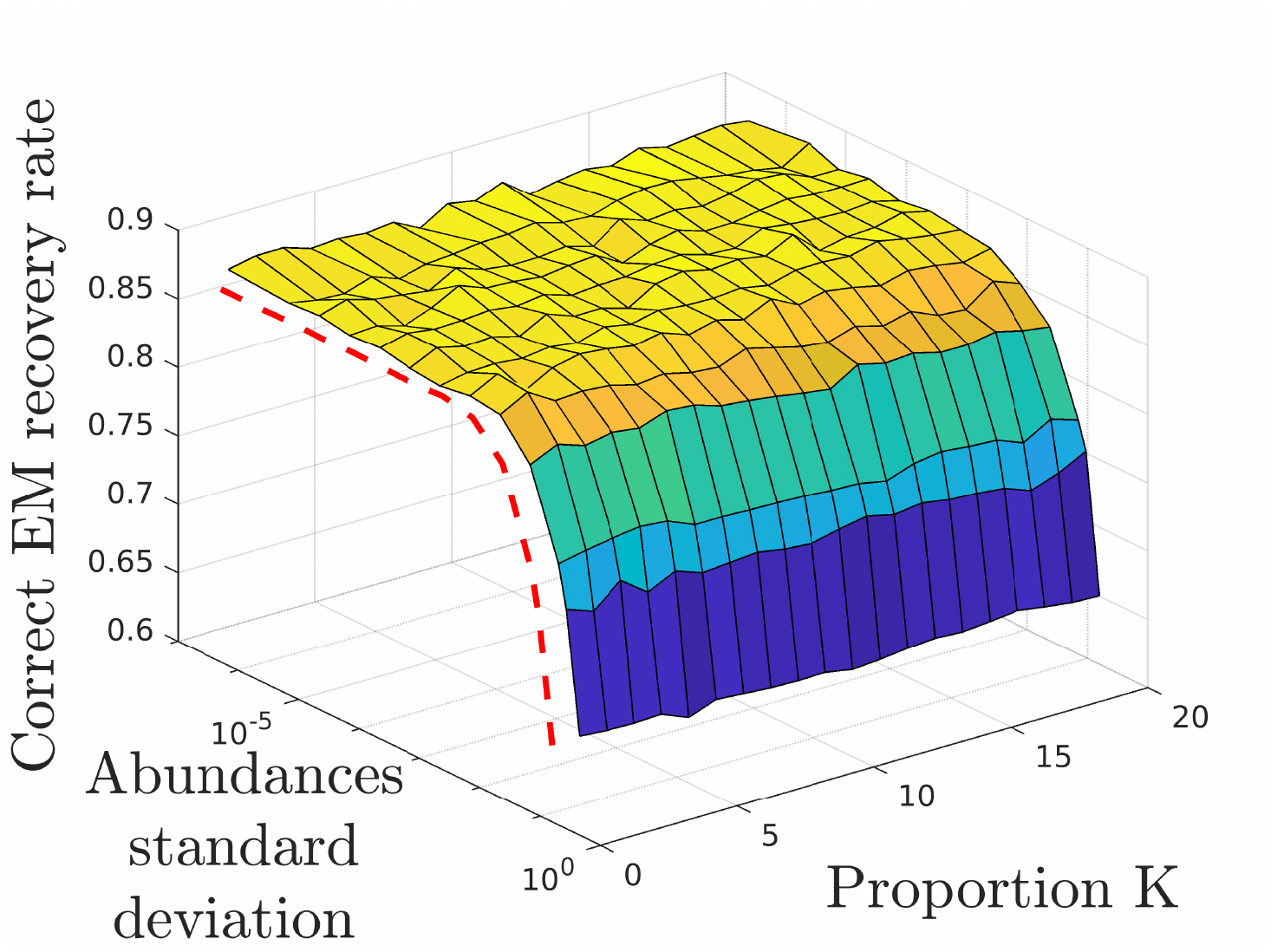}
    \caption{\cred{Abundance variance analysis}: Correct EM recovery rate as a function of abundances standard deviation and \cred{$K$}, for SNRs of 25dB (left), 35dB (middle) and 45dB (right).}
    \label{fig:synthResults_abundancesVarianceAnalysis}
\end{figure*}
\subsubsection{\textbf{Abundance variance analysis}} \label{sec:results_syntheticData_abVar}

We evaluated how often the solution to problem~\eqref{eq:opt_appr_perm_i} was the true EM matrix $\bM_{t,n}$ when compared to that of MESMA, for different amounts of small abundance variances, which are represented by $\bdelta_n$ in~\eqref{eq:multit_model_ib}. This experiment also allowed us to validate numerically one part of Theorem~\ref{thm:theorem1}. We considered $\kappa=0$, $P=4$, $C_{\cred{p}}=3$, and different values of $K\in\{1,\ldots,20\}$. 
The variable abundances were generated by sampling each $\ba_{t,n}$ from Dirichlet distributions with mean $\bmu^{\ba}_{n}$, $n=1,\ldots,N$ and standard deviations ranging from~$10^{-6}$ to~$1$. The abundance mean vectors $\bmu^{\ba}_{n}$ were also sampled from a Dirichlet distribution. The performance were evaluated again using the endmember PPV as defined in~\eqref{eq:synthMetrics_PPV_M}.

The results are shown in Figure~\ref{fig:synthResults_abundancesVarianceAnalysis} for \cred{SNRs of 25, 35 and 45db}, with the red dashed line depicting the MESMA results. It can be seen that for smaller abundance temporal variations (e.g., about $\leq10^{-2}$), \cred{FM-MESMA} was able to obtain an endmember positive predictive value that was high and very similar to that of MESMA, even for larger values of $K$. This again supports the use of \cred{FM-MESMA} as an alternative to MESMA under these circumstances.
For larger temporal variations, a decrease in performance was observed for both MESMA and \cred{FM-MESMA}. This is due to the fact that, when the variance of $\ba_{t,n}$ is large, its density tends to concentrate at the edges of the simplex, what causes many abundance fractions to be close to zero, making the identification of the correct $\bM_{t,n}$ through problems~\eqref{eq:mesma} or~\eqref{eq:opt_appr_perm_i} more difficult.
Although a smaller SNR negatively impacts the overall accuracy of the algorithms (like in the simulations of Section~\ref{sec:results_syntheticData_libVar}), the relative behavior between the algorithms remains approximately the same.

\begin{table}[htb]
\renewcommand{\arraystretch}{1.1}
\footnotesize
\centering
\caption{Average abundance and EM estimation results for the semi-real simulations (values are $\times10^2$).}
\vspace{-0.2cm}
\label{tab:synthetic_semireal_abundance_ex}
\begin{tabular}{l|ccccccccc}
\hline
 & FCLS & OU & MESMA & AAM & Proposed \\\hline
$\text{RMSE}_{\bA}$ & 3.51 & 2.27 & 1.87 & 1.90 & \textbf{1.57} \\
$\text{RMSE}_{\bM}$ & --   & 45.6 & 37.3 & 38.5 & \textbf{36.2} \\
$\text{SAM}_{\bM}$  & --   & 24.3 & \textbf{10.7} & \textbf{10.7} & 11.2 \\
$\text{RMSE}_{\bY}$ & 1.89 & \textbf{0.18} & 0.85 & 0.88 & 0.96 \\
\hline
\end{tabular}
\bigskip
\caption{\cred{Quantitative results for the data in Figure~\ref{fig:composite_vis_semireal} (values are $\times10^2$).}}
\vspace{-0.2cm}
\label{tab:synthetic_semireal_abundance_ex_vis}
\begin{tabular}{l|ccccccccc}
\hline
 & FCLS & OU & MESMA & AAM & Proposed \\\hline
$\text{RMSE}_{\bA}$ & 2.32 & 2.16 & \textbf{1.11} & 1.27 & \textbf{1.11} \\
$\text{RMSE}_{\bM}$ & --   & 35.7 & \textbf{33.8} & 36.4 & 34.0 \\
$\text{SAM}_{\bM}$  & --   & 16.4 & 11.2 & \textbf{10.9} & 11.6 \\
$\text{RMSE}_{\bY}$ & 5.07 & \textbf{0.26} & 0.80 & 0.83 & 0.96 \\
\hline
\end{tabular}
\end{table}

\begin{figure}[h]
    \centering
    \includegraphics[width=1\linewidth]{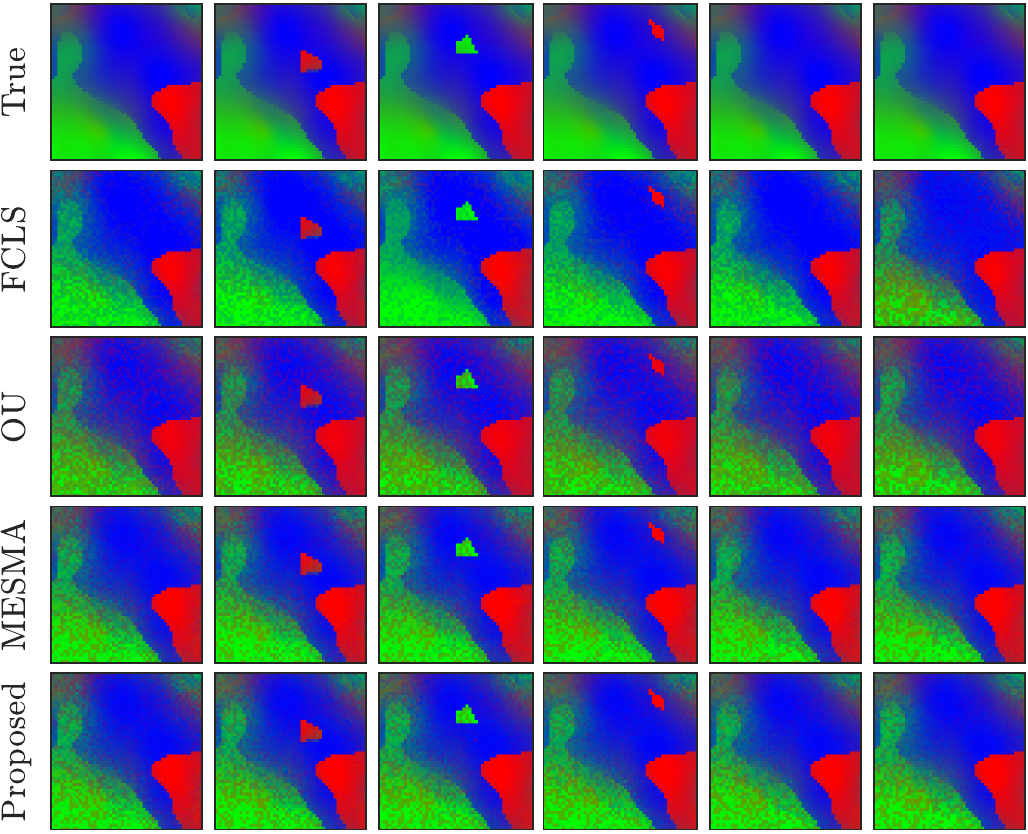}
    \flushright \includegraphics[width=0.79\linewidth]{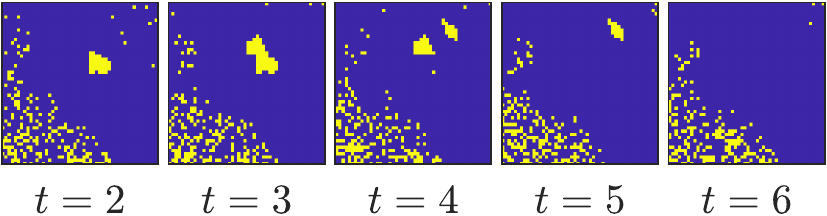}
    \caption{\cred{Composite color image of abundance maps (i.e., abundances for EMs 1, 2 and 3 correspond to red, green and blue colors) in the visual semi-real dataset (top) and changes detected by the proposed method bottom).}}
    \label{fig:composite_vis_semireal}
\end{figure}

\subsection{Abundance estimation performance in semi-real data} 
\label{sec:semi_real_simuls}

We evaluated \cred{FM-MESMA} in terms of abundance and endmember estimation accuracy by comparing it to the FCLS, OU, MESMA, and AAM algorithms using a synthetic data set designed to closely emulate a practical scenario. To this end, first we manually extracted $P=3$ sets of spectrally distinct pure pixels from different materials (tree, road and water) from the Jasper Ridge HI containing six signatures with $L=198$ bands each. Then, we randomly sampled signatures from these sets in order to create one pair of disjoint spectral libraries $\calM_{\cred{p}}^1$, $\calM_{\cred{p}}^2$, i.e., $\calM_{\cred{p}}^1\cap\calM_{\cred{p}}^2=\varnothing$ for each material $\cred{p}=1,\ldots,P$, where both $\calM_{\cred{p}}^1$ and $\calM_{\cred{p}}^2$ have \cred{three signatures each}.
\cred{Afterwards, two sets of temporal abundances were generated. The first one was generated randomly to allow for a statistical evaluation, in which}
$N=1000$ pixels in a sequence of $T=20$ images were generated following the model in~\eqref{eq:multit_model_ia}, where the abundances $\ba_{t,n}$ were sampled from a Dirichlet distribution \cred{and a proportion of $\kappa=0.05$ of the pixels undergoing significant changes. The second set of abundance maps consisted of a single sequence of $T=6$ images containing $N=2500$ pixels whose spatial compositions were adequate for a visual inspection of the results, and large changes were added in the form of random convex polygons. The latter sequence can be seen at the top row of Figure~\ref{fig:composite_vis_semireal}. } 
\cred{For both cases, $\bM_{t,n}$ was sampled uniformly from the library $\calM_{\cred{p}}^1$ and the additive} noise $\be_t$ was selected as white Gaussian with an SNR of 30dB.

The library-based methods, namely, MESMA, AAM and \cred{FM-MESMA}, were then used to unmix the images using the library $\calM_{\cred{p}}^2$, with $K=10$ selected for the proposed method. This generated a mismatch between the signatures in the HI and those used for SU, what is commonly observed in practice. To \cred{reliably} measure the performance of the methods, we ran this simulation over 100 Monte Carlo realizations. The average performance of all algorithms are depicted in Table~\ref{tab:synthetic_semireal_abundance_ex}. It can be observed that the library-based methods provided a considerable improvement in abundance and endmember estimation accuracy when compared to both FCLS and OU. The results of AAM and MESMA were also very similar, with those of AAM being slightly worse. \cred{FM-MESMA} provided an improvement of about $16\%$ in abundance estimation when compared to MESMA. This confirms the benefits of exploring the temporal correlation between the abundances at adjacent time instants. In terms of endmember estimation accuracy, \cred{FM-MESMA} performed similarly to MESMA, with a slightly better $\text{RMSE}_{\bM}$ but slightly worse $\text{SAM}_{\bM}$.
\cred{In terms of $\text{RMSE}_{\bY}$, the library-based methods achieved a similar reconstruction error (with MESMA's being slightly smaller as it solves the EM selection problem exactly), significantly smaller than FCLS but still larger than OU. Since the OU algorithm estimates the EMs from the scene, it has more degrees of freedom and is able to achieve a smaller $\text{RMSE}_{\bY}$. However, it is well-known that smaller reconstruction errors do not necessarily translate into better abundance map estimates.}

\cred{For the second set of abundance maps, the quantitative and visual results are shown in Table~\ref{tab:synthetic_semireal_abundance_ex_vis} and in Figure~\ref{fig:composite_vis_semireal}. Due to space limitations 1) we do not show the AAM results since they were very similar to those of MESMA, and 2) only a composite image is shown, with red, green and blue corresponding to the abundances of the first, second, and third EMs, respectively. The quantitative metrics show that the algorithms behaved very similarly to the previous case, with the minor differences that MESMA and AAM now performed slightly better in terms of $\text{RMSE}_{\bA}$ and $\text{RMSE}_{\bM}$, respectively, when compared to the remaining methods. Visually, it can be seen that the results of MESMA and of FM-MESMA are similar and approach the ground truth more closely when compared to FCLS and OU, whose results appear less accurate and more skewed towards blue for the FCLS, and towards red for OU when compared to the ground truth. The sudden changes in the ground truth sequence were also well captured by FM-MESMA. However, various pixels in the lower-left segment were also incorrectly marked as changes, indicating that the selection of the proportion $K$ was slightly conservative.}

\begin{figure}
    \centering
    \includegraphics[width=\linewidth]{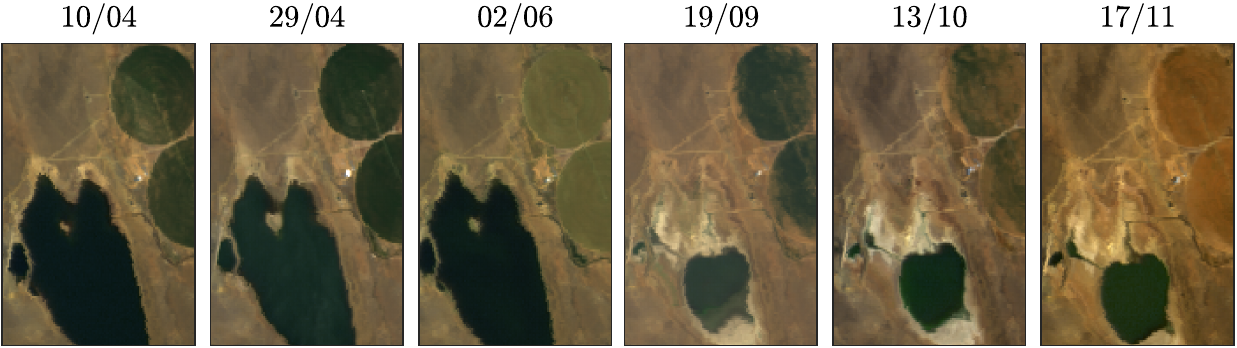}
    \caption{Lake Tahoe HI sequence (heading text means day/month).}
    \label{fig:lake_tahoe_HIs}
\end{figure}

\begin{figure}
    \centering
    \includegraphics[width=\linewidth]{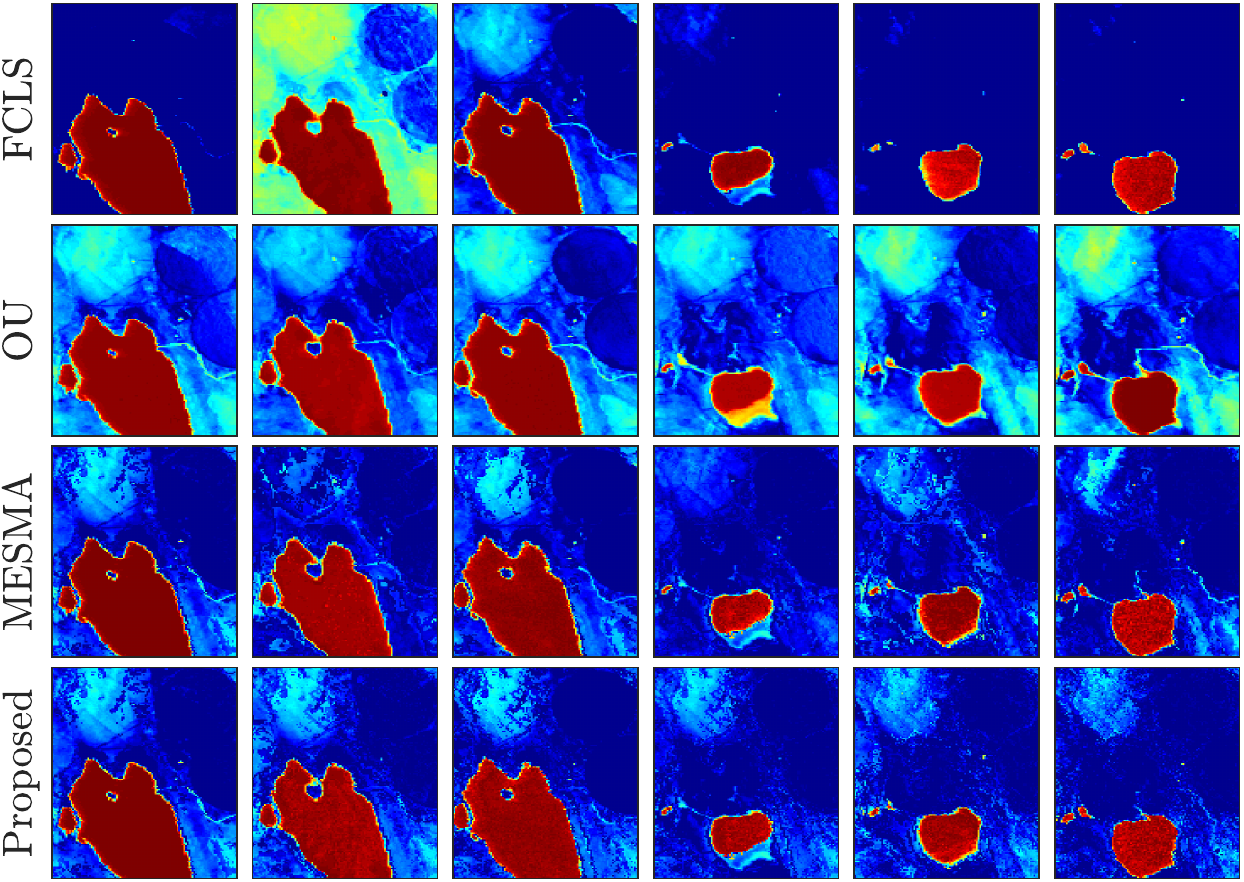}
    \caption{Multitemporal \cred{abundances} of the Lake Tahoe HI for the water EM.}
    \label{fig:abundances_tahoe_water}
\end{figure}
\begin{figure}
    \centering
    \includegraphics[width=\linewidth]{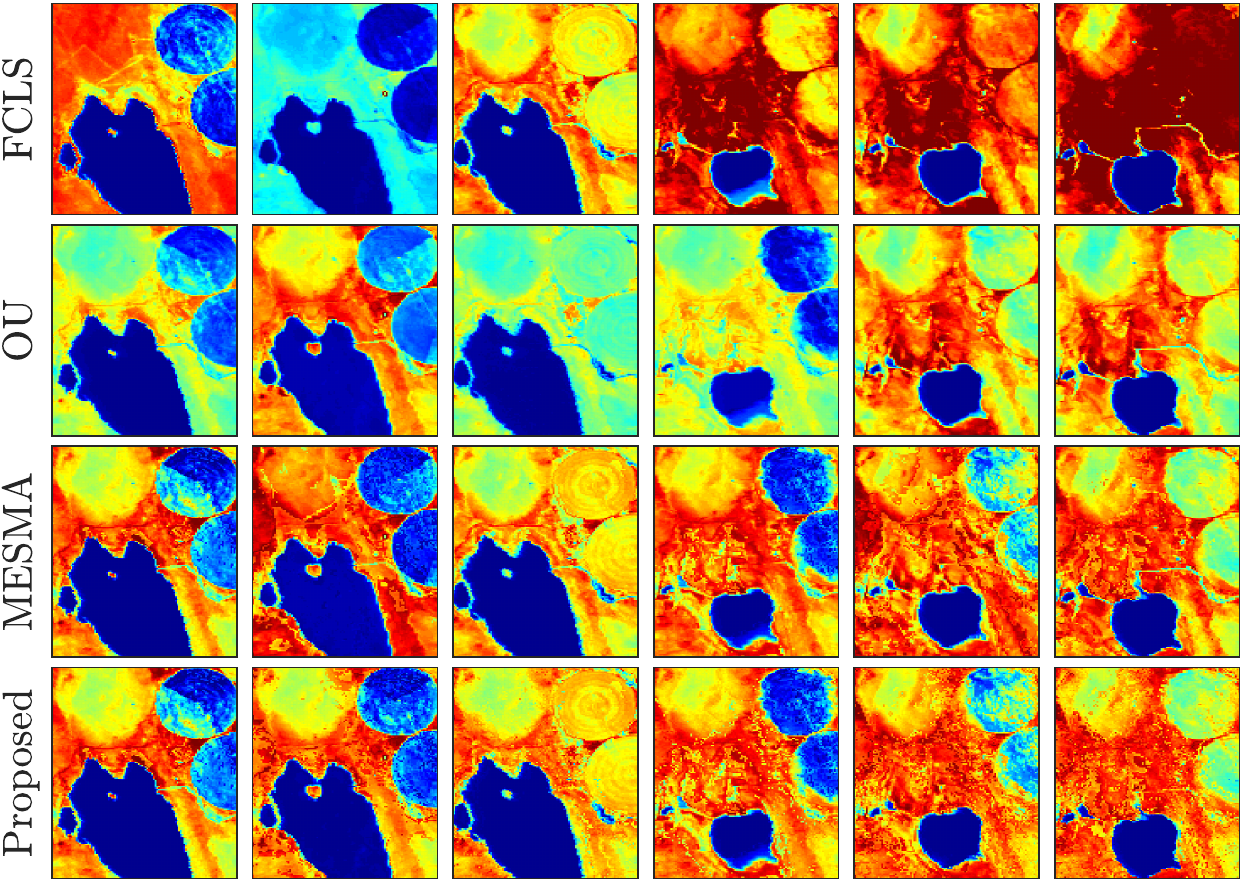}
    \caption{Multitemporal \cred{abundances} of the Lake Tahoe HI for the soil EM.}
    \label{fig:abundances_tahoe_soil}
\end{figure}
\begin{figure}
    \centering
    \includegraphics[width=\linewidth]{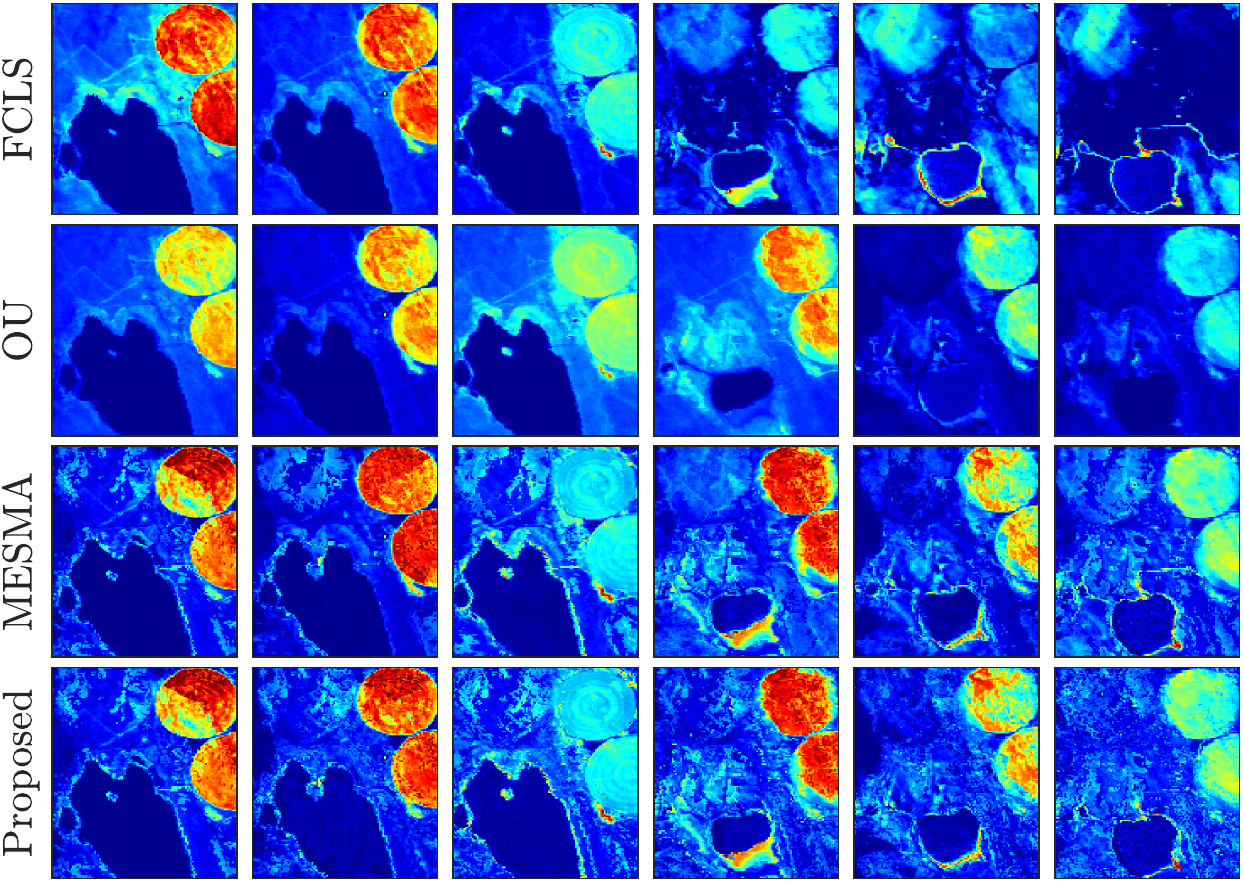}
    \caption{Multitemporal \cred{abundances} of the Lake Tahoe HI for the vegetation EM.}
    \label{fig:abundances_tahoe_vegetation}
\end{figure}
\begin{figure}
    \centering
    \includegraphics[width=\linewidth]{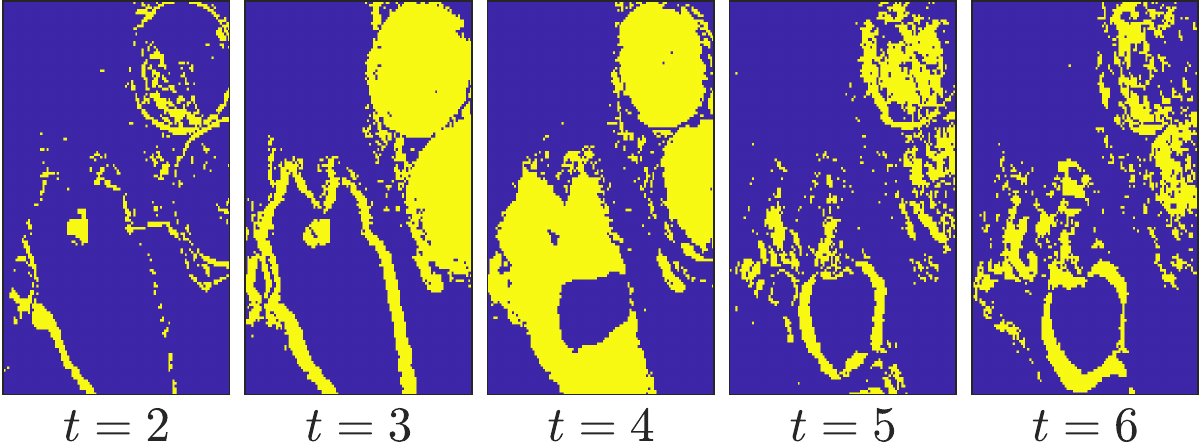}
    \caption{Detected changes for the Lake Tahoe HIs.}
    \label{fig:tahoe_changes}
\end{figure}

\subsection{Real Data}

For the simulations with real data, we considered the Lake Tahoe data set, which was originally presented in~\cite{Thouvenin_IEEE_TIP_2016}. This data set consists of $T=6$ images acquired by the AVIRIS instrument, each with $224$ bands and $N=16500$ pixels. Water absorption and low SNR bands were removed, resulting in $L=173$ bands. A false color representation is shown in~\ref{fig:lake_tahoe_HIs}, where three predominant materials (soil, water and vegetation) can be identified. The images were ordered according to the day/month in which they were acquired in order to make the seasonal changes more clear.
The EM libraries for MESMA and for \cred{FM-MESMA} were constructed as follows. First, reference EMs for each material were extracted by applying the VCA to the concatenation of the pixels of all six images. Then, a preliminary library was constructed by extracting pixels from each image that had small spectral angle to the reference endmembers. Finally, the libraries were obtained by removing/pruning the most redundant signatures (as measured according to their mutual Euclidean distance) so that the final libraries contained $C_{\cred{p}}=6$ signatures for each EM. We set $K=10$ for \cred{FM-MESMA}, and the parameters for the OU algorithm were the same as those used in~\cite{Thouvenin_IEEE_TIP_2016}.

The abundance maps estimated by the algorithms are shown in Figures~\ref{fig:abundances_tahoe_water},~\ref{fig:abundances_tahoe_soil} and~\ref{fig:abundances_tahoe_vegetation}. Due to lack of space, we do not show the AAM results since they were very similar to those of MESMA. It can be seen that the FCLS performed poorly for some of the frames (e.g., the second and the fourth ones), with significant confusion between different materials. 
Although OU \cred{showed} more consistent results, they \cred{were} not as good as those by MESMA \cred{or FM-MESMA}. Specifically, the OU abundances \cred{did} not show a separation between vegetation and soil as clear as observed in Figure~\ref{fig:lake_tahoe_HIs}. Moreover, significant water abundances \cred{were predicted} outside of the lake. The results of MESMA and \cred{FM-MESMA were} very similar, and closely \cred{agreed} with the distribution of the endmembers observed in Figure~\ref{fig:lake_tahoe_HIs}. However, some slight differences between the methods can be noticed, notably the abundances by the proposed algorithm being smoother in time as in the case of the water endmember.
The changes detected by \cred{FM-MESMA}, shown in Figure~\ref{fig:tahoe_changes}, clearly \cred{distinguish} the seasonal variations at the borders of the lake and in the crop circles due to different water levels and in the growth states, respectively.

The reconstruction errors $\text{RMSE}_{\bY}$ and execution times of the algorithms are provided in Table~\ref{tab:times_real_data}. 
\cred{The $\text{RMSE}_{\bY}$ results behaved very similarly to the semi-real case discussed in Section~\ref{sec:semi_real_simuls}, with OU achieving the smallest and FCLS the largest reconstruction errors, and the library-based methods performing similarly to each other.}
The execution time of \cred{FM-MESMA} was about half that of MESMA even though there were significant changes between some pairs of frames, what agrees with the results in Table~\ref{tab:comparative_synthetic_execution_times}. Moreover, this scene contains a small number of materials and is thus not a good representative of the relative performance between the algorithms in other scenarios. For image sequences with larger values of $P$, the computational complexity gains will be much more significant.

\begin{table}[!t]
\renewcommand{\arraystretch}{1.2}
\footnotesize
\centering
\caption{Execution times and $\text{RMSE}_{\bY}$ for the Lake Tahoe HI.}
\vspace{-0.2cm}
\label{tab:times_real_data}
\begin{tabular}{c|ccccccccc}
\hline
 & FCLS & OU & MESMA & AAM & Proposed \\\hline
Time [s] & 12.3 & 85.5 & 112.0 & 228.8 & 62.8 \\
$\text{RMSE}_{\bY}$ & 0.412 & 0.013 & 0.064 & 0.068 & 0.072 \\
\hline
\end{tabular}
\end{table}

\section{Conclusions} \label{sec:conclusions}

In this paper, we presented a new computationally efficient multitemporal unmixing algorithm \cred{(FM-MESMA)} based on multiple endmembers spectral mixture analysis. The proposed strategy exploits the high temporal correlation of the abundance maps in order to improve both the accuracy and the computational complexity of the algorithm. Specifically, it approximates the solution to the multitemporal unmixing problem by separating it into two sub-problems, namely, endmember selection and abundance estimation, which are much easier to solve individually. A strategy was also proposed to detect abrupt abundance changes by analysing residuals of the endmember selection problem. Theoretical results demonstrated how \cred{FM-MESMA} compares to MESMA in terms of quality and effectiveness in detecting abrupt abundance changes. Besides, these results also provide valuable insight into the conditions under which the approximate algorithm succeeds. Simulation results showed that the proposed method gives results with quality similar to, or better than, both MESMA and parametric models at a reduced computational complexity.

\bibliographystyle{IEEEtran}
\bibliography{references_mtemp,references_revpaper}

\end{document}